\let\mathbb=\varmathbb
\colorlet{MyBlue}{DodgerBlue!40!Black}
\colorlet{MyGreen}{DarkGreen!50!Black}
\newcommand{\bH}{\mathbf{H}}
\newcommand{\bI}{\mathbf{I}}
\newcommand{\bP}{\mathbf{P}}
\newcommand{\bQ}{\mathbf{Q}}
\newcommand{\bV}{\mathbf{V}}
\newcommand{\bW}{\mathbf{W}}
\newcommand{\bY}{\mathbf{Y}}
\newcommand{\bZ}{\mathbf{Z}}
\newcommand{\bu}{\mathbf{u}}
\newcommand{\bx}{\mathbf{x}}
\newcommand{\by}{\mathbf{y}}
\newcommand{\bz}{\mathbf{z}}
\newcommand{\C}{\mathbb{C}}
\newcommand{\N}{\mathbb{N}}
\DeclareMathOperator*{\argmax}{arg\,max}
\DeclareMathOperator{\bigoh}{\mathcal O}
\DeclareMathOperator{\ex}{\mathbb{E}}
\DeclareMathOperator{\prob}{\mathbb{P}}
\DeclareMathOperator{\tr}{tr}
\DeclareMathOperator{\var}{Var}
\newcommand{\dd}{\:d}
\newcommand{\eps}{\varepsilon}
\newcommand{\mgeq}{\succcurlyeq}
\newcommand{\wilde}{\widetilde}
\newcommand{\abs}[1]{\left\lvert #1 \right\rvert}
\newcommand{\norm}[1]{\left\| #1 \right\|}
\newcommand{\smallnorm}[1]{\| #1 \|}
\newcommand{\given}{\:\vert\:}
\newcommand{\vargiven}{\:\middle\vert\:}
\newcommand{\dis}{\displaystyle}
\newcommand{\txs}{\textstyle}
\newcommand{\insum}{\sum\nolimits}
\newcommand{\kwd}[1]{\textsf{\bfseries#1}}
\newtheorem{theorem}{Theorem}
\newtheorem*{corollary*}{Corollary}
\newtheorem{lemma}{Lemma}
\newtheorem{proposition}{Proposition}
\theoremstyle{definition}
\newtheorem*{definition*}{Definition}
\theoremstyle{remark}
\newtheorem{remark}{Remark}
\newtheorem*{remark*}{Remark}
\newcommand{\one}{\mathds{1}}
\newcommand{\step}{\gamma}
\newcommand{\erg}{\textup{erg}}
\newcommand{\rate}{R}
\newcommand{\ergrate}{\rate_{\erg}}
\newcommand{\noisedev}{\sigma}
\newcommand{\pay}{R}
\newcommand{\eq}{\bQ^{\ast}}
\newcommand{\eqset}{\spectron^{\ast}}
\newcommand{\fench}{F}
\newcommand{\hypref}[1]{\textup{(\hyperref[#1]{H\ref*{#1}})}}
\newcommand{\strat}{\mathcal{X}}
\newcommand{\play}{\mathcal{K}}
\newcommand{\spectron}{\mathcal{Q}}
\begin{document}


\title{In an Uncertain World: Distributed Optimization in MIMO Systems with Imperfect Information}

\author{%
Panayotis Mertikopoulos%
,~\IEEEmembership{Member,~IEEE},
and
Aris L. Moustakas%
,~\IEEEmembership{Senior Member,~IEEE}
\thanks{%
This research was supported in part by the European Commission in the framework of the FP7 Network of Excellence in Wireless COMmunications NEWCOM\# (contract no. 318306),
and by the French National Research Agency projects
NETLEARN (ANR\textendash 13\textendash INFR\textendash 004)
and GAGA (ANR\textendash 13\textendash JS01\textendash 0004\textendash 01).
Part of this work was presented in ISIT 2012 and ISIT 2014 \cite{MBM12,CGM14b}.}
\thanks{P.~Mertikopoulos is with the French National Center for Scientific Research (CNRS) and the Laboratoire d'Informatique de Grenoble, Grenoble, France;
A.~L.~Moustakas is with the Physics Department, University of Athens, Greece and Supélec, Gif-sur-Yvette, France, supported by the Digiteo Senior Chair "ASAPGONE". }
} 

\maketitle

\newacro{AIMD}{additive increase, multiplicative decrease}
\newacro{5G}{fifth generation}
\newacro{SISO}{single-input and single-output}
\newacro{MIMO}{multiple-input and multiple-output}
\newacro{MUI}{multi-user interference-plus-noise}
\newacro{MAC}{multiple access channel}
\newacro{PMAC}{parallel multiple access channel}
\newacro{CSI}{channel state information}
\newacro{CSIT}{channel state information at the transmitter}
\newacro{BS}{base station}
\newacro{TDD}{time-division duplexing}
\newacro{CDMA}{code division multiple access}
\newacro{FDMA}{frequency division multiple access}
\newacro{DSL}{digital subscriber line}
\newacro{SIC}{successive interference cancellation}
\newacro{SUD}{single user decoding}
\newacro{SINR}{signal-to-interference-and-noise ratio}
\newacro{KKT}{Ka\-rush--Kuhn--Tuc\-ker}
\newacro{WF}{water-filling}
\newacro{IWF}{iterative water-filling}
\newacro{SWF}{simultaneous water-filling}
\newacro{iid}[i.i.d.]{independent and identically distributed}
\newacro{OFDMA}{orthogonal frequency-division multiple access}
\newacro{MXL}{matrix exponential learning}
\newacro{AMXL}[MXL-a]{asynchronous matrix exponential learning}
\newacro{EXL}[MXL-eig]{eigen-based exponential learning}
\newacro{FCC}{Federal Communications Commission}
\newacro{NTIA}{National Telecommunications and Information Administration}
\newacro{GAO}{General Accounting Office}
\newacro{QoE}{quality of experience}
\newacro{QoS}{quality of service}
\newacro{OFDM}{orthogonal frequency division multiplexing}\acused{OFDM}
\newacro{EW}{exponential weight}
\newacro{OGD}{online gradient descent}
\newacro{OMD}{online mirror descent}
\newacro{APT}{asymptotic pseudotrajectory}
\newacro{ICT}{internally chain transitive}
\newacro{MSE}{mean squared error}
\newacro{EPA}{extended pedestrian A}
\newacro{EVA}{extended vehicular A}

\begin{abstract}
In this paper, we introduce a distributed algorithm that optimizes the Gaussian signal covariance matrices of multi-antenna users transmitting to a common multi-antenna receiver under imperfect and possibly delayed \acl{CSI}.
The algorithm is based on an extension of exponential learning techniques to a semidefinite setting and it requires the same information as distributed \acl{WF} methods.
Unlike \acl{WF} however, the proposed \ac{MXL} algorithm converges to the system's optimum signal covariance profile under very mild conditions on the channel uncertainty statistics;
moreover, the algorithm retains its convergence properties even in the presence of user update asynchronicities, random delays and/or ergodically changing channel conditions.
In particular, by properly tuning the algorithm's learning rate (or step size), the algorithm converges within a few iterations, even for large numbers of users and/or antennas per user.
Our theoretical analysis is complemented by numerical simulations which illustrate the algorithm's robustness and scalability in realistic network conditions.
\end{abstract}

\begin{IEEEkeywords}
Imperfect CSI;
matrix exponential learning;
MIMO;
signal covariance optimization;
stochastic approximation.
\end{IEEEkeywords}

\acresetall

\section{Introduction}
\label{sec:intro}

Following the seminal prediction that the use of multiple antennas in signal transmission and reception can lead to substantial performance gains \cite{FG98,Tel99}, \ac{MIMO} technologies have become an integral component of state-of-the-art wireless communication protocols (such as 3G LTE, 4G, HSPA+ and WiMax to name but a few).
To capitalize on these gains, the emerging massive \ac{MIMO} paradigm, a contending technology for 5G networks, ``goes large'' by scaling up existing multiple-antenna transceivers through the use of inexpensive service antennas and \ac{TDD} in order to focus energy into ever smaller regions of space \cite{HtBD13,RPL+13,LETM14}.
In so doing, massive \ac{MIMO} arrays can increase throughput by a factor of $10\times$ (or more),
bring about significant latency reductions over the air interface,
and greatly improve the system's robustness to ambient noise \cite{LETM14,ABC+14}.

Nevertheless, when coupled with the projected network densification (e.g. due to the massive deployment of small cells), the resulting antenna density is expected to create increased interference which will have to be mitigated through spatial focusing.
To that end, it is crucial to optimize the input signal distribution of each user, especially in the moderate \ac{SINR} regime:
in this way, wireless receivers can achieve higher transmission rates with the same power, thus increasing spatial spectrum reuse and improving their radiated energy efficiency and overall \ac{QoE}.

This optimization is typically achieved via \ac{WF} methods \cite{CV93,YRBC04,SPB08-jsac} that rely on the transmitters having access to accurate \ac{CSI}, including their individual transfer matrices and the \ac{MUI} that they are facing at the receiver.
One way to collect the required \ac{CSI} is to have the \ac{BS} emit reverse-link pilot waveforms that allow each terminal to estimate the corresponding channel responses over a given frequency band.
However, given that the number of responses that must be estimated at each terminal is proportional to the number of antennas at the \ac{BS}, massive \ac{MIMO} systems may require up to a hundred times more pilots than conventional \ac{MIMO} systems \cite{RPL+13};
as such, one of the most popular solutions is to operate in \ac{TDD} mode and rely on uplink-downlink reciprocity for the exchange of \ac{CSI} pilot feedback \cite{LETM14}.

In this context, a major challenge occurs when transmitters are required to optimize their input signal covariance matrices in the presence of imperfect and/or delayed \ac{CSI} \textendash\ e.g. due to the vastly increased impact of pilot contamination in massive \ac{MIMO} systems \cite{Mar10,HtBD13}.
In the absence of perfect \ac{CSIT}, the convergence of \ac{WF} methods is no longer guaranteed because the algorithms' fixed point can be significantly perturbed by erroneous (or obsolete) \ac{CSI}.
As a result, the efficient deployment of massive \ac{MIMO} systems calls for flexible and robust optimization algorithms that are capable of dealing with feedback uncertainty originating from temporal fading, lack of adaptation synchronicity, measurement errors due to noise and pilot contamination, etc.


In this paper, we propose a distributed optimization algorithm based on the method of \ac{MXL} that was recently introduced by the authors of \cite{MBM12,CGM14b}.
Essentially, rather than updating their signal covariance matrices, transmitters update the logarithm of these matrices based on (possibly imperfect) measurements of a matrix analogue of the transmitter's \ac{SINR}.
The benefit of updating the logarithm of a user's covariance matrix is that the algorithm's updates only need to be Hermitian (and not necessarily positive-definite), so it is much easier to respect the problem's semidefiniteness constraints.
Furthermore, in contrast to \ac{WF} methods, the proposed algorithm proceeds by \emph{aggregating} \ac{CSI} feedback over time:
in this way, measurement errors, noise and asynchronicities effectively vanish in the long run thanks to the law of large numbers for martingales.
In particular, the proposed algorithm has the following desirable attributes:
\begin{enumerate}
\addtolength{\itemsep}{2pt}
\item
It is \emph{distributed:}
user updates are based on local information and channel measurements.
\item
It is \emph{robust:}
measurements and observations may be subject to random errors and noise.
\item
It is \emph{stateless:}
users do not need to know the state (or topology) of the system.
\item
It is \emph{reinforcing:}
each user tends to increase his own rate.
\item
It is \emph{decentralized:}
user updates need not be synchronized or otherwise coordinated.
\end{enumerate}


A good paradigm to test the performance and convergence properties of the proposed algorithm is the widely studied vector Gaussian \ac{MAC} \cite{YRBC04}.
This channel model is the \ac{MIMO} equivalent of the \ac{PMAC} and consists of several (and mutually independent) \ac{MIMO} transceivers that are linked to a common multi-antenna receiver.
In this framework, assuming perfect \ac{CSIT}, it is well known that \ac{IWF} converges to the system's optimum transmit profile \cite{YRBC04}, but the algorithm's convergence speed decreases proportionally with the number of transmitting users.
On the other hand, \ac{SWF} is much faster, but it may fail to converge altogether:
as was shown in \cite{MBML12}, the sufficient conditions that guarantee the convergence of \ac{SWF} methods may fail to hold even in simple $2\times2$ \ac{PMAC} systems;
To make matters worse, this situation is exacerbated in the case of imperfect \ac{CSI} where even the convergence of \ac{IWF} is no longer guaranteed;
by contrast, the proposed \ac{MXL} algorithm converges rapidly to the system's optimum transmit profile (even for large numbers of users and/or antennas per user), and it remains convergent irrespective of the magnitude of the measurement noise.

Our theoretical analysis relies on the powerful stochastic approximation methods of \cite{Ben99,Bor08} and the matrix regularization machinery of \cite{TRW05,KSST12}.
Specifically, we first establish the algorithm's convergence in a continuous-time setting, and we then use the ODE method of stochastic approximation \cite{Ben99} and the theory of concentration inequalities for martingales \cite{HH80,dlP99} to show that this convergence is retained in a discrete-time setting \textendash\ even under noisy and delayed/asynchronous updates.%
\footnote{For a related continuous-to-discrete descent, see the recent paper \cite{MB14} on unilateral online optimization in dynamically varying cognitive radio systems.}
The algorithm's main tunable parameter is its step-size which controls the rate at which users learn:
picking larger step sizes accelerates the algorithm, but this acceleration comes at the expense of accuracy (a crucial trade-off in the presence of noise).
Still, if the error process has finite $p$-th moments for some $p\geq20$, convergence is guaranteed as long as the algorithm's step-size $\step_{n}$ satisfies $\sum_{n} \step_{n}^{1+p/2}<\infty$;
in practice, the central limit theorem guarantees that estimators of Gaussian variables have finite moments of every order, so the algorithm's step-size can be taken nearly constant, guaranteeing in this way relatively rapid convergence even in very noisy environments.
In addition, we show that, under mild conditions on the moments of the error process, the tail behavior of the deviations is Gaussian \textendash\ i.e. it is sharply concentrated around its mean.

\subsection*{Paper outline and summary of results}

After introducing our system model in the next section, we derive the proposed \acl{MXL} algorithm in Section \ref{sec:MXL} (cf. Algorithm~\ref{alg:MXL});
in the same section, we also state our main convergence result in the presence of measurement noise and errors (Theorem \ref{thm:conv}).
Since one of the main goals of this paper is to illustrate the convergence properties of \ac{MXL} in the presence of various stochastic impediments, Section \ref{sec:estimate} provides a concrete example of feedback matrix estimation, while Section \ref{sec:AMXL} extends the convergence results of Section \ref{sec:MXL} to the case of asynchronous user updates (Theorem \ref{thm:conv-AMXL}).
Section \ref{sec:EXL} describes the evolution of the eigenvalues and eigenvectors of the users' signal covariance matrices (Theorem \ref{thm:conv-EXL}), while Section \ref{sec:fading} extends our results further to the case of fast-fading channels (Theorem \ref{thm:conv-erg}).
Finally, our theoretical analysis is validated and supplemented by numerical simulations in Section \ref{sec:numerics}.

To streamline the flow of the paper, proofs and technical details have been delegated to a series of appendices at the end.

\section{System Model}
\label{sec:model}

%
%

Consider a Gaussian vector \acf{MAC} where a finite set of wireless users $k\in\play\equiv \{1,\dotsc,K\}$ transmit simultaneously over a common channel to a base receiver with $N$ antennas.
If the $k$-th transmitter is equipped with $M_{k}$ transmit antennas, we get the familiar signal model
\begin{equation}
\label{eq:signal}
\by
	= \insum_{k=1}^{K} \bH_{k} \bx_{k} + \bz,
\end{equation}
where:
\begin{enumerate}
\item
$\bx_{k}\in\C^{M_{k}}$ is the message transmitted by user $k\in\play$.
\item
$\by\in\C^{N}$ denotes the aggregate signal at the receiver.
\item
$\bH_{k}\in\C^{N\times M_{k}}$ is the $N\times M_{k}$ channel matrix of user $k$.
\item
$\bz\in\C^{N}$ is the ambient noise in the channel,
including thermal, atmospheric and other peripheral interference effects (and modeled for simplicity as a zero-mean, complex Gaussian vector with unit covariance).%
\footnote{Obviously, depending on the structure and statistical properties of the channel matrices $\bH_{k}$,
the signal model \eqref{eq:signal} could be applied to a wide variety of telecommunications systems, ranging from \ac{DSL} uplink networks with T\oe plitz circulant $\bH_{k}$ to \ac{CDMA} and/or \ac{FDMA} radio networks \cite{SCS99}.
For concreteness however, we will interpret \eqref{eq:signal} as an ad hoc multi-user \ac{MIMO} \acl{MAC} with $\bH_{k}$ representing the channel of each link.}
\end{enumerate}

In this context, the average transmit power of user $k$ is simply
\begin{equation}
\label{eq:power}
p_{k}
	= \ex\big[ \|\bx_{k}\|^{2} \big]
	= \tr(\bQ_{k}),
\end{equation}
where $\bQ_{k}$ denotes the user's signal covariance matrix:
\begin{equation}
\label{eq:Qdef}
\bQ_{k}
	= \ex\big[\bx_{k} \bx_{k}^{\dag}\big],
\end{equation}
and the expectation is taken over the Gaussian codebook of user $k$.
Hence, assuming that each user's maximum transmit power is finite, we obtain the feasibility constraints:
\begin{equation}
\label{eq:constraints0}
\bQ_{k} \mgeq0
	\quad
	\text{and}
	\quad
	\tr(\bQ_{k}) \leq P_{k},
\end{equation}
where $P_{k}>0$ denotes the maximum transmit power of user $k$.


The first part of our analysis focuses on \emph{static channels}, i.e. the channel matrices $\bH_{k}$ will be assumed to remain constant (or nearly constant) throughout the transmission horizon (fast-fading channels will be treated in Section \ref{sec:fading}).
In this case, assuming \ac{SUD} at the receiver (i.e. interference by all other users is treated as additive colored noise), each user's achievable transmission rate will be given by the familiar expression \cite{Tel99}:
\begin{flalign}
\label{eq:pay}
\pay_{k}(\bQ)
	& = \log\det\left(\bW_{-k} + \bH_{k} \bQ_{k} \bH_{k}^{\dag}\right) - \log\det\left(\bW_{-k}\right),
\end{flalign}
where $\bQ = (\bQ_{1},\dotsc,\bQ_{K})$
and
\begin{equation}
\label{eq:Wkdef}
\txs
\bW_{-k}
	= \bI + \insum_{\ell\neq k} \bH_{\ell} \bQ_{\ell} \bH_{\ell}^{\dag}
\end{equation}
represents the \acf{MUI} covariance matrix of user $k$.
Thus, given that each user needs to saturate his power constraints in order to maximize his individual rate, we will say that a transmit profile $\eq = (\eq_{1},\dotsc,\eq_{K})$ is at \emph{Nash equilibrium} when no user can unilaterally improve his individual achievable rate $\pay_{k}$, i.e.
\begin{equation}
\label{eq:Nash}
\tag{NE}
\pay_{k}(\eq)
	\geq
	\pay_{k}(\bQ_{k};\eq_{-k})
	\quad
	\text{for all $\bQ_{k}\in\strat_{k}$, $k\in\play$,}
\end{equation}
where $(\bQ_{k};\eq_{-k})$ is shorthand for $(\eq_{1},\dotsc,\bQ_{k},\dotsc,\eq_{K})$ and
\begin{equation}
\label{eq:strat}
\strat_{k}
	= \big\{\bQ_{k}\in\C^{M_{k}\times M_{k}}: \bQ_{k}\mgeq0, \tr(\bQ_{k}) = P_{k} \big\}
\end{equation}
denotes the (compact, convex) set of feasible signal covariance matrices for user $k$.%
\footnote{Under an energy-aware objective, users would not need to saturate their power constraints, but such considerations lie beyond the scope of this paper.}

Dually to the above, if the receiver employs \ac{SIC} techniques to decode the received messages, the users' achievable sum rate will be \cite{YRBC04}:
\begin{equation}
\label{eq:rate}
\txs
\rate(\bQ)
	= \log\det\left( \bI + \insum_{k} \bH_{k} \bQ_{k} \bH_{k}^{\dag}\right).
\end{equation}
In this way, we obtain the sum rate maximization problem:
\begin{equation}
\label{eq:RM}
\tag{RM}
\begin{aligned}
\text{maximize}
	&\quad
	\rate(\bQ),
	\\
\text{subject to}
	&\quad
	\bQ_{k}\in\strat_{k},
	\;
	k =1,\dotsc,K.
\end{aligned}
\end{equation}
Importantly, as can be easily checked, the users' sum rate \eqref{eq:rate} is a \emph{potential function} for the game defined by \eqref{eq:pay} in the sense that
\begin{equation}
\label{eq:potential}
\pay_{k}(\bQ_{k};\bQ_{-k}) - \pay_{k}(\bQ_{k}';\bQ_{-k})
	= \rate(\bQ_{k};\bQ_{-k}) - \rate(\bQ_{k}';\bQ_{-k}).
\end{equation}
Hence, with $\rate$ concave, it follows that the solutions of the Nash equilibrium problem \eqref{eq:Nash} coincide with the solutions of \eqref{eq:RM};
put differently, \emph{optimizing the users' achievable sum rate \eqref{eq:rate} under \ac{SIC} is equivalent to equilibrating the users' individual transmission rates \eqref{eq:pay} under \ac{SUD}}.

For concreteness, in the rest of this paper, we will focus on the sum rate maximization problem \eqref{eq:RM};
however, owing to the above observation, our results will obviously apply to the unilateral equilibration problem \eqref{eq:Nash} as well.

\section{Learning with Imperfect and Delayed Information}
\label{sec:learning}

The sum rate maximization problem \eqref{eq:RM} is traditionally solved by \acf{WF} methods \cite{CV93}, either iterative \cite{YRBC04,SPB09-sp} or simultaneous \cite{SPB06}.
More precisely, transmitters are typically assumed to have perfect knowledge of the channel matrices $\bH_{k}$ and the aggregate signal-plus-noise covariance matrix
\begin{equation}
\label{eq:W}
\txs
\bW
	= \ex[\by\by^{\dag}]
	= \bI + \insum_{\ell} \bH_{\ell} \bQ_{\ell} \bH_{\ell}^{\dag},
\end{equation}
which is in turn used to calculate the \ac{MUI} covariance matrices $\bW_{-k} = \bW - \bH_{k} \bQ_{k} \bH_{k}^{\dag}$ and ``water-fill'' the effective channel matrices $\wilde \bH_{k} = \bW_{-k}^{-1/2} \bH_{k}$ at the transmitter \cite{YRBC04}.
At a multi-user level, this water-filling process could take place either iteratively (with users updating their covariance matrices in a round robin fashion) \cite{YRBC04} or simultaneously (with all users updating at once) \cite{SPB06}:
the former updating scheme converges always (but slowly for large numbers of users) \cite{YRBC04}, whereas the latter is much faster \cite{SPB06} but may occasionally fail to converge, even in simple, $2$-user parallel \aclp{MAC} \cite{MBML12}.

An added complication in the use of \ac{WF} methods is that they rely on perfect \acf{CSIT} and accurate measurements of $\bW$ at the receiver (who can broadcast this information via a dedicated radio channel or as part of the \ac{TDD} downlink phase).
When such measurements are not available, it is not known whether \ac{WF} methods converge;
on that account, our goal in this section will be to describe a distributed learning method that allows users to attain the system's sum capacity in a distributed way, using only imperfect (and possibly delayed) information.

\subsection{Matrix exponential learning}
\label{sec:MXL}

Instead of relying on fixed-point methods, our approach will rely on tracking the direction of steepest ascent of the system's sum rate in a dual, unconstrained space, and then map the result back to the problem's feasible space via matrix exponentiation.
Formally, assuming for the moment perfect \ac{CSIT}, we will consider the \ac{MXL} scheme:
\begin{equation}
\label{eq:MXL}
\tag{MXL}
\begin{aligned}
\bY_{k}(n+1)
	&= \bY_{k}(n) + \step_{n} \bV_{k}(\bQ(n)),
	\\
\bQ_{k}(n+1)
	&= P_{k} \frac{\exp(\bY_{k}(n+1))}{\tr\big[\exp(\bY_{k}(n+1))\big]},
\end{aligned}
\end{equation}
where:
\begin{enumerate}

\item
$\bV_{k}\equiv\bV_{k}(\bQ)$ denotes the (matrix) derivative of the system's sum rate with respect to each user's covariance matrix:
\begin{equation}
\label{eq:V}
\bV_{k}(\bQ)
	\equiv \nabla_{\bQ_{k}} \rate
	= \nabla_{\bQ_{k}} \pay_{k}
	= \bH_{k}^{\dag} \bW^{-1} \bH_{k}
\end{equation}

\item
$\bY_{k}$ is an auxiliary ``scoring'' matrix which tracks the direction of steepest sum rate ascent.

\item
$\step_{n}$ is a decreasing step-size sequence (typically, $\step_{n} = 1/n$).
\end{enumerate}

\begin{remark}
Given that $\rate(\bQ)$ is real-valued, $\bV_{k}$ and $\bY_{k}$ are automatically Hermitian, so the matrix exponential $\exp(\bY_{k})$ is positive-definite, as required;
the trace normalization in \eqref{eq:MXL} then ensures that $\bQ_{k}(n)$ satisfies the feasibility constraints \eqref{eq:strat} of \eqref{eq:RM} for all $n\geq1$.
In this way, \eqref{eq:MXL} can be seen as
a two-stage, ``primal-dual'' gradient method \cite{Nes09} which reinforces the spatial directions that lead to higher sum rates by allocating more power to the corresponding eigen-directions of the users' covariance matrices.
\end{remark}

To account for imperfect \ac{CSIT} and noisy measurements at the receiver, we will assume that the gradient matrices $\bV$ of \eqref{eq:V} are only known up to a noisy estimate $\hat \bV$.
In particular, we envision the following sequence of events:
\begin{enumerate}
\item
At every update period $n=1,2,\dotsc$, each user $k\in\play$ gets an estimate $\hat\bV_{k}(n)$ of the true gradient matrix $\bV_{k}(\bQ(n))$.
\item
Users update their signal covariance matrices according to \eqref{eq:MXL} and the process repeats.
\end{enumerate}
More concretely, this recurring process may be encoded in algorithmic form as follows:
\begin{algorithm}[H]
{%
\sf
\vspace{2pt}
Parameter:
decreasing step-size sequence $\step_{n}$
\\
Initialize:\;
$n \leftarrow 0$;\;
$\bY_{k} \leftarrow 0$;\;
$\bQ_{k} \leftarrow \frac{P_{k}}{M_{k}} \cdot \bI$
\\[2pt]
\Repeat
{%
$n \leftarrow n+1$;
\\[2pt]
\ForEach
{user $k \in \play$}
{%
	receive estimate $\hat\bV_{k}$ of
	$\bV_{k} = \bH_{k}^{\dag} \bW^{-1} \bH_{k}$;
	\\[2pt]
	update score matrix:
	$\bY_{k} \leftarrow \bY_{k} + \step_{n} \hat\bV_{k}$;
	\\[2pt]
	set covariance matrix:
	$\dis \bQ_{k} \leftarrow P_{k}\cdot \exp(\bY_{k}) \big/\tr[\exp(\bY_{k})]$;
} 
\vspace{.5ex}
\kwd{until} termination criterion is reached.} 
} 
\caption{Matrix Exponential Learning (\acs{MXL}).}
\label{alg:MXL}
\end{algorithm}

The \ac{MXL} algorithm above will be the main focus of our paper, so a few remarks are in order:


\paragraph{Implementation}
\label{par:implementation}

From an implementation point of view, \ac{MXL} has the following desirable properties:
\begin{enumerate}[(P1)]

\item
\emph{Distributedness:}
users have the same information requirements as in distributed \acl{WF} \cite{YRBC04,SPB06,SPB09-sp}.

\item
\emph{Robustness:}
the algorithm does not assume perfect \ac{CSIT} or precise signal measurements at the receiver.

\item
\emph{Statelessness:}
users do not need to know the state of the system (e.g. its topology).

\item
\emph{Reinforcement:}
users reinforce the transmit directions that lead to higher transmission rates.

\end{enumerate}

\smallskip
\paragraph{Assumptions on the measurement errors}
\label{par:noise}

Throughout this paper, we will work with the following statistical hypotheses for the noise process $\bZ_{k}(n) = \hat\bV_{k}(n) - \bV_{k}(\bQ(n))$:
\begin{enumerate}
[({H}1)]

\item
\label{hyp:zeromean}
\emph{Unbiasedness:}
\begin{equation}
\tag{H1}
\label{eq:zeromean}
\ex\left[\bZ(n+1) \vargiven \bQ(n)\right]
	= 0.
\end{equation}

\item
\label{hyp:MSE}
\emph{Finite \acf{MSE}:}
\begin{equation}
\tag{H2}
\label{eq:MSE}
\ex\big[ \norm{\bZ(n+1)}^{2} \given \bQ(n) \big]
	\leq \noisedev^{2}
	\quad
	\text{for some $\noisedev>0$}.
\end{equation}
\end{enumerate}
The statistical hypotheses above allow us to account for a very wide range of error processes:
in particular, we will \emph{not} be assuming \ac{iid} errors, or even errors that are a.s. bounded.%
\footnote{This observation is crucial in the context of wireless networks because measurement errors are typically correlated with the state of the system.}
In fact, Hypotheses \hypref{hyp:zeromean} and \hypref{hyp:MSE} simply amount to asking that the gradient estimate $\hat\bV$ be unbiased and bounded in mean square:
\begin{equation}
\label{eq:Vbound}
\txs
\ex\left[ \smallnorm{\hat\bV_{k}(n+1)}^{2} \vargiven \bQ(n) \right]
	\leq V_{k}^{2}
	\quad
	\text{for some $V_{k}>0$,}
\end{equation}
and our convergence results will be stated with only this mild requirement in mind.

That being said, we obtain even sharper results in some cases under the additional hypothesis:
\begin{enumerate}
[({H}1)]
\setcounter{enumi}{2}
\item
\label{hyp:symmetry}
\emph{Conditionally symmetric error distributions:}
the law of $\bZ(n+1)$ given $\bQ(n)$ is symmetric.
\end{enumerate}
Hypothesis \hypref{hyp:symmetry} also applies to the vast majority of centered processes (uniform, Gaussian, Lévy $\alpha$-stable, Laplace, etc.) and we will use it to further refine our convergence results.

\paragraph{The estimation process}
\label{par:estimation}

As stated, Algorithm \ref{alg:MXL} does not detail how the system's users can obtain an unbiased estimate $\hat\bV_{k}$ of $\bV_{k}$ from individual \ac{CSI} observations and measurements of the received signal covariance matrix $\bW$.
To simplify our presentation, we will state our convergence results below under the assumption that
there is an oracle-like mechanism that returns such estimates to the users on request;
the construction of such a mechanism will then be detailed in Section \ref{sec:estimate}.

\begin{remark*}
We should also note here that if the gradient estimates $\hat\bV_{k}$ are not Hermitian, the score matrices $\hat\bY$ will not be Hermitian either so the users' covariance matrices may fail to be positive-definite.
To avoid such complications, it suffices to take $\big[\hat\bV + \hat\bV^{\dag}\big]/2$ of $\hat\bV$ as an estimator for $\bV$;
for simplicity, we will tacitly assume that this hermitization step has already taken place if necessary.
\end{remark*}


\smallskip
\paragraph{Complexity per iteration}
\label{par:complexity}

From a computational standpoint, it is easy to see that the complexity of each iteration of Algorithm~\ref{alg:MXL} is polynomial (with a low degree) in the number of transmit and receive antennas (for calculations at the transmitter and receiver side respectively).
Specifically, the complexity of the required matrix inversion and exponentiation steps is $\bigoh(N^{\omega})$ and $\bigoh(M_{k}^{\omega})$ respectively, where the exponent $\omega$ is as low as $2.373$ if the processing units employ fast Coppersmith\textendash Winograd methods for matrix multiplication \cite{DS13}.%
\footnote{In particular, the complexity of each iteration of Algorithm 1 is that of matrix multiplication.}
The Hermitian structure of $\bW$ can be exploited to reduce the computational cost of each iteration even further, but such issues lie beyond the scope of this paper;
in practice, the number of transmit and receive antennas are physically constrained by the size of the wireless array, so these operations are quite light.

\smallskip
With all this in mind, our main result is as follows:

\begin{theorem}
\label{thm:conv}
Assume that the \ac{MXL} algorithm (Alg.~\ref{alg:MXL}) is run with nonincreasing step-sizes $\step_{n}$ such that $\sum_{n} \step_{n}^{2} < \sum_{n} \step_{n} = \infty$ and noisy measurements $\hat\bV(n)$ satisfying hypotheses \eqref{eq:zeromean} and \eqref{eq:MSE}.
Then, $\bQ(n)$ converges almost surely to $\argmax_{\bQ} \rate(\bQ)$.

\smallskip

More generally, let $\bar\rate_{n} = \sum_{j=1}^{n} \step_{j}\rate_{j} \big/ \sum_{j=1}^{n} \step_{j}$ denote the time average of the users' sum rate with respect to an arbitrary nonincreasing step-size sequence $\step_{n}$.
Then:
\begin{enumerate}
[\indent i\textup)]
\addtolength{\itemsep}{2pt}
\item
\hfill
\makebox[0pt][r]{
\begin{minipage}[b]{.9\columnwidth}
\begin{equation}
\label{eq:conv-mean}
\ex\big[ \bar\rate_{n} \big]
	\geq R_{\max}
	- \eps_{n},
\end{equation}
\end{minipage}
} 

\item
\hfill
\makebox[0pt][r]{
\begin{minipage}[b]{.9\columnwidth}
\begin{equation}
\label{eq:conv-prob}
\txs
\prob\left(\rate_{\max} - \bar\rate_{n} \geq \eps_{n} + z \right)
	= \bigoh\left(\noisedev^{2} z^{-2} t_{n}^{-2} \insum_{j=1}^{n} \step_{j}^{2} \right)
\end{equation}
\end{minipage}
} 
\end{enumerate}
where
$\rate_{\max} = \max_{\bQ} \rate(\bQ)$ is the system's sum capacity,
$t_{n} = \sum_{j=1}^{n} \step_{j}$,
and
\begin{equation}
\label{eq:conv-bound}
\txs
\eps_{n}
	= t_{n}^{-1} \left[ \insum_{k=1}^{K} \log M_{k} + \frac{1}{2} L^{2} \insum_{j=1}^{n} \step_{j}^{2} \right],
\end{equation}
denotes the algorithm's mean performance guarantee at the \mbox{$n$-th} update period
\textup(in the above, $M_{k}$ is the number of transmit antennas of user $k$ and $L^{2} = \sum_{k=1}^{K} P_{k}^{2} V_{k}^{2}$ is a positive constant depending on the users' maximum transmit powers $P_{k}$ and the mean square $V_{k}^{2}$ of their measurements\textup).

\smallskip

Finally, if \hypref{hyp:symmetry} also holds, the concentration of the algorithm around its mean value at a given iteration is exponential:
\begin{equation}
\label{eq:conv-prob-exp}
\txs
-\log \prob\left( \rate_{\max} - \bar\rate_{n} \geq \eps_{n} + z \right)
	= \bigoh\left(t_{n}^{2} z^{2} \noisedev^{-2} \big/ \insum_{j=1}^{n} \step_{j}^{2} \right)
\end{equation}
\end{theorem}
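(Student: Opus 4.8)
The plan is to read \eqref{eq:MXL} as a stochastic ``follow-the-regularized-leader'' iteration driven by a matrix entropic regularizer, control it with a Fenchel-coupling energy function measured against an arbitrary maximizer of $\rate$, and extract the three quantitative statements by telescoping a single energy inequality (the almost sure part coming from the ODE method). Concretely, for each $k\in\play$ let $h_{k}$ be the (von Neumann) entropic regularizer on $\strat_{k}$ whose convex conjugate satisfies $\nabla h_{k}^{\ast}=\bG_{k}$, where $\bG_{k}(\bY_{k})=P_{k}\exp(\bY_{k})/\tr[\exp(\bY_{k})]$ is the Gibbs map of \eqref{eq:MXL}, so that the second line of \eqref{eq:MXL} reads $\bQ_{k}(n+1)=\nabla h_{k}^{\ast}(\bY_{k}(n+1))$; let $F_{k}(\bQ_{k},\bY_{k})=h_{k}(\bQ_{k})+h_{k}^{\ast}(\bY_{k})-\product{\bY_{k}}{\bQ_{k}}\ge 0$ be the associated Fenchel coupling. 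From the matrix-regularization machinery of \cite{TRW05,KSST12} I need two facts: $h_{k}$ is strongly convex relative to the nuclear norm, so $h_{k}^{\ast}$ is smooth relative to the spectral norm; and $F_{k}(\bQ_{k},\bY_{k})$ coincides with a quantum relative entropy and, by the matrix Pinsker inequality, dominates a constant times $\smallnorm{\bQ_{k}-\bG_{k}(\bY_{k})}_{1}^{2}$.

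The core estimate is a per-user Bregman identity along the iterates. Since $\bQ_{k}(n)=\nabla h_{k}^{\ast}(\bY_{k}(n))$, $\bY_{k}(n+1)=\bY_{k}(n)+\step_{n}\hat\bV_{k}(n)$, smoothness of $h_{k}^{\ast}$ together with $\nabla_{\bY}F_{k}(\bQ_{k},\bY_{k})=\bG_{k}(\bY_{k})-\bQ_{k}$ gives, for any maximizer $\eq$ and $F(n):=\sum_{k}F_{k}(\eq_{k},\bY_{k}(n))$, the bound $F(n+1)\le F(n)-\step_{n}\sum_{k}\product{\hat\bV_{k}(n)}{\eq_{k}-\bQ_{k}(n)}+\tfrac12\step_{n}^{2}\Sigma_{n}$ with $\Sigma_{n}$ the accumulated second-order term and $\ex[\Sigma_{n}\given\bQ(n)]\le L^{2}$ by \eqref{eq:MSE}. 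Writing $\hat\bV=\bV+\bZ$ with $\bV_{k}(n)=\nabla_{\bQ_{k}}\rate(\bQ(n))$ from \eqref{eq:V}, using concavity of $\rate$ in the form $\sum_{k}\product{\bV_{k}(n)}{\eq_{k}-\bQ_{k}(n)}\ge\rate_{\max}-\rate_{n}$, and setting $\xi_{n}:=\sum_{k}\product{\bZ_{k}(n)}{\bQ_{k}(n)-\eq_{k}}$ (a martingale difference by \eqref{eq:zeromean}, with $\ex[\xi_{n}^{2}\given\bQ(n)]=\bigoh(\noisedev^{2})$ by \eqref{eq:MSE} since $\smallnorm{\bQ_{k}-\eq_{k}}_{1}\le 2P_{k}$), the master inequality becomes
\[ F(n+1)\le F(n)-\step_{n}(\rate_{\max}-\rate_{n})+\step_{n}\xi_{n}+\tfrac12\step_{n}^{2}\Sigma_{n},\qquad \ex[\Sigma_{n}\given\bQ(n)]\le L^{2}. \]

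Harvesting the four claims: (a) Taking $\ex[\,\cdot\given\bQ(n)]$ kills $\xi_{n}$ and bounds $\Sigma_{n}$, so by the Robbins--Siegmund almost-supermartingale theorem ($\sum_{n}\step_{n}^{2}<\infty$) one gets $\sum_{n}\step_{n}(\rate_{\max}-\rate_{n})<\infty$ a.s., hence $\liminf_{n}\rate_{n}=\rate_{\max}$ a.s. ($\sum_{n}\step_{n}=\infty$); in parallel, the mean dynamics $\dot\bY_{k}=\bV_{k}(\bG(\bY))$ of \eqref{eq:MXL} admits $\rate\circ\bG$ as a Lyapunov function whose critical set $\{\bY:\bV_{k}(\bG(\bY))\propto\bI\ \forall k\}$ is mapped by $\bG$ into $\argmax\rate$ (because $\bG_{k}(\bY_{k})$ lies in the relative interior of $\strat_{k}$, so $\nabla_{\bQ_{k}}\rate\propto\bI$ is precisely the first-order optimality condition via the potential-game identity \eqref{eq:potential}), and since \eqref{eq:zeromean}--\eqref{eq:MSE} make $(\bZ(n))$ an $L^{2}$-bounded martingale-difference noise, the interpolation of $(\bY(n))$ is a.s.\ an asymptotic pseudotrajectory of this flow \cite{Ben99}; its limit set is then internally chain transitive, hence contained in the critical set, so $\bQ(n)=\bG(\bY(n))$ converges a.s.\ to $\argmax\rate$. (b) Summing the master inequality over $j\le n$, discarding $-F(n+1)\le 0$, bounding $F(1)=\sum_{k}F_{k}(\eq_{k},0)\le\sum_{k}\log M_{k}$, dividing by $t_{n}$ and taking expectations ($\ex\xi_{j}=0$, $\ex\Sigma_{j}\le L^{2}$) yields $\ex[\rate_{\max}-\bar\rate_{n}]\le\eps_{n}$, i.e.\ \eqref{eq:conv-mean}. (c) The same summed inequality gives $\rate_{\max}-\bar\rate_{n}-\eps_{n}\le t_{n}^{-1}N_{n}$ where $N_{n}=\sum_{j\le n}\step_{j}\xi_{j}$ (plus a zero-mean second-order remainder) is a martingale with $\ex N_{n}^{2}=\bigoh(\noisedev^{2}\sum_{j\le n}\step_{j}^{2})$; Chebyshev's inequality then gives $\prob(\rate_{\max}-\bar\rate_{n}\ge\eps_{n}+z)=\bigoh(\noisedev^{2}z^{-2}t_{n}^{-2}\sum_{j\le n}\step_{j}^{2})$, which is \eqref{eq:conv-prob}. (d) Under \hypref{hyp:symmetry} the increments $\step_{j}\xi_{j}$ are conditionally symmetric, so $\ex[\exp(\lambda N_{n}-\tfrac{\lambda^{2}}{2}[N]_{n})]\le 1$ (the de la Peña exponential inequality for conditionally symmetric martingales, \cite{HH80,dlP99}); controlling $[N]_{n}$ by its predictable bound $\bigoh(\noisedev^{2}\sum_{j\le n}\step_{j}^{2})$ and optimizing over $\lambda$ upgrades the Chebyshev estimate to $\prob(\rate_{\max}-\bar\rate_{n}\ge\eps_{n}+z)\le\exp\!\big(-\Omega\big(t_{n}^{2}z^{2}\noisedev^{-2}/\sum_{j\le n}\step_{j}^{2}\big)\big)$, i.e.\ \eqref{eq:conv-prob-exp}.

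The main obstacle is step (a). Because the sum-rate optimizers are typically rank-deficient (beamforming solutions on the boundary of $\prod_{k}\strat_{k}$), the naive ``$F(n)\to 0$ along a subsequence'' argument fails: quantum relative entropy $D(\eq_{k}\,\|\,\cdot)$ need not be continuous at a rank-deficient $\eq_{k}$. I therefore route convergence through the ODE method, and the real work is in verifying that $\rate\circ\bG$ is a bona fide (strict) Lyapunov function for the mean flow and that its critical set projects \emph{exactly} into $\argmax\rate$ (this uses \eqref{eq:potential}, strict concavity of $\rate$ in each block, and the fact that $\bG_{k}$ lands in $\relint\strat_{k}$), and then in checking the asymptotic-pseudotrajectory criterion of \cite{Ben99} under only \eqref{eq:zeromean}--\eqref{eq:MSE}. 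A secondary point is the bookkeeping in (c)--(d): one must isolate the genuinely stochastic martingale $N_{n}$ from the deterministic part (the bound $\bW\mgeq\bI\Rightarrow\smallnorm{\bV_{k}}\le\smallnorm{\bH_{k}}^{2}$ shows the unbounded contribution is carried solely by $\bZ$) and confirm that its predictable quadratic variation is $\bigoh(\noisedev^{2}\sum_{j}\step_{j}^{2})$, which is exactly what makes de la Peña's inequality deliver the stated Gaussian concentration.
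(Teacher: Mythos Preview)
Your master inequality via the Fenchel coupling and the handling of the three quantitative claims are essentially identical to the paper's: telescope, bound $F(1)\le\sum_{k}\log M_{k}$, take expectations for \eqref{eq:conv-mean}, apply Chebyshev to the martingale $N_{n}=\sum_{j}\step_{j}\xi_{j}$ for \eqref{eq:conv-prob}, and invoke de~la~Pe\~na's self-normalized inequality for \eqref{eq:conv-prob-exp}. For the almost-sure part you use Robbins--Siegmund to extract $\sum_{n}\step_{n}(\rate_{\max}-\rate_{n})<\infty$ a.s.; the paper instead argues by contradiction (if $\bQ(n)$ stayed a fixed distance from $\eqset$ the telescoped Fenchel coupling would be driven to $-\infty$, with Hardy's Tauberian criterion used to kill the weighted noise average). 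Both routes yield ``$\bQ(n)$ visits every neighborhood of $\eqset$ infinitely often''; note, though, that Robbins--Siegmund gives you only $\limsup_{n}\rate_{n}=\rate_{\max}$, not $\liminf$.

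Where your argument breaks is the ODE half of (a). You run the APT/ICT machinery in $\bY$-space with Lyapunov function $\rate\circ\bG$ and critical set $\{\bY:\bV_{k}(\bG(\bY))\propto\bI\ \forall k\}$. But in precisely the rank-deficient case you flag, this critical set is \emph{empty}: $\bG_{k}(\bY_{k})$ is always full rank, so $\bV_{k}(\bG(\bY))\propto\bI$ would force a full-rank maximizer of $\rate$, which need not exist. The orbits of $\dot\bY=\bV(\bG(\bY))$ then escape to infinity, the interpolated $\bY(n)$ has empty $\omega$-limit set, and Bena\"im's ICT theorem yields nothing. The paper avoids this by taking the Fenchel coupling $F(\eq,\bY)$ itself as the Lyapunov function for the continuous dynamics (one line gives $\dot F=\tr[(\bQ-\eq)\bV]\le 0$ with equality only on $\eqset$), so that $\eqset$ is a global attractor in the compact $\bQ$-space, and then closes with Bena\"im's attractor theorem (Thm.~6.10) using the infinitely-often recurrence already established. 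Your Robbins--Siegmund step supplies exactly that recurrence, so the fix is simply to swap $\rate\circ\bG$ for the Fenchel-coupling Lyapunov, or equivalently to push the APT down to $\bQ$-space via the Lipschitz map $\bG$ before invoking any limit-set statement.
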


\begin{proof}
See Appendix \ref{app:proofs-basic}.
\end{proof}

In what follows, we discuss some important points regarding Theorem \ref{thm:conv}:

\setcounter{paragraph}{0}

\smallskip
\paragraph{On the choice of $\step_{n}$}
\label{par:step}

The use of a decreasing step-size sequence $\step_{n}$ in \eqref{eq:MXL} might appear counter-intuitive because it implies that new gradients enter the algorithm with \emph{decreasing weights} (after all, intuition suggests that one should put more weight on recent observations rather than older, obsolete ones).
However, if the algorithm has reached a near-optimal point, a constant step size might cause it to overshoot and miss its mark:
this can be seen clearly from the mean error bound \eqref{eq:conv-bound} which does not vanish as $n\to\infty$ for constant step sizes of the form $\step_{n} = \step$.

As a rule of thumb, the use of a (large) constant step size speeds up the algorithm but may also lead to unwanted oscillations towards the end because it does not dissipate measurement noise and discretization errors:
if the system's users seek to eliminate such phenomena, a decreasing step size should be preferred instead.
In fact, to obtain the best of both worlds, we can consider an adaptive schedule where the method's step-size is initially very high (to accelerate the search of the problem's state space), and is then abruptly decreased when oscillations are detected;
we test such schedules in Section \ref{sec:numerics}.


We should also note here that the requirement $\sum_{n} \step_{n}^{2} < \infty$ for almost sure convergence is tied to the finite \ac{MSE} hypothesis \eqref{eq:MSE} and can be relaxed significantly if there are sharper bounds on the central moments of the estimator $\hat\bV$.
For instance, if the error process $\bZ(n)$ has finite $p$-th moments (cf. Hypothesis \eqref{eq:moments} below), the analysis of \cite{Ben99} shows that the summability condition $\sum_{n}\step_{n}^{2}< \infty$ can be replaced by the much lighter requirement $\sum_{n} \step_{n}^{1+p/2} < \infty$ which allows us to use considerably faster step-size sequences.

\smallskip
\paragraph{Convergence rate}
\label{par:convrate}

If users employ a constant step-size sequence $\step_{j} = \step$ for a number of iterations $n$ that is \emph{fixed in advance} (using ``\textsf{for}'' instead of ``\textsf{while}''), an easy calculation shows that the minimum value of the mean guarantee \eqref{eq:conv-bound} is attained for $\step_{j} = \step = L^{-1} \sqrt{2 \sum_{k} \log M_{k} / n}$ and is equal to:%
\footnote{Note also that $\bar\rate_{n}$ is then given by the standard expression $\bar\rate_{n} = n^{-1}\sum_{j=1}^{n} \rate_{j}$.}
\begin{equation}
\eps_{n}
	= L \sqrt{\frac{2 \sum_{k} \log M_{k}}{n}}.
\end{equation}
Put differently, the mean performance guarantee \eqref{eq:conv-bound} with constant step size falls below $\eps>0$ in $n = 2 L^{2} \sum_{k} \log M_{k} / \eps^{2} = \bigoh(K\eps^{-2})$ iterations.

That being said, this guarantee concerns the \emph{empirical average} of the system's sum rate $\bar \rate_{n} = n^{-1} \sum_{j} \rate_{j}$ and not the users' \emph{instantaneous} sum rate $\rate_{n}$.
Empirical averages evolve much slower than actual values and, as we show in Section \ref{sec:numerics}, the users' instantaneous sum rate $\rate_{n}$ increases much faster and converges to the system's sum capacity within a few iterations (even for very large numbers of users and/or antennas per user).

\smallskip

\paragraph{Large deviations and outage probabilities}
Eq.~\eqref{eq:conv-prob} represents the probability of observing sum rates far below the channel's capacity, so it can be interpreted as a measure of the system's outage probability under \eqref{eq:MXL}.
As such, the tail behavior of \eqref{eq:conv-prob} shows that \ac{MXL} hardens considerably around its deterministic limit:
even though measurement errors can become arbitrarily large (note that a finite \ac{MSE} does not imply a.s. bounded errors), the probability of observing sum rates much lower than what is obtainable with perfect gradient measurements decays very fast \textendash\ exponentially so if \hypref{hyp:symmetry} holds.
In particular, for large $n$, the step-size factor $t_{n}^{-2} \insum_{j=1}^{n} \step_{j}^{2}$ which controls the width of non-negligible large deviations (the parameter $z$) in \eqref{eq:conv-prob} is of order $\bigoh(1/n)$ for step-size sequences of the form $\step_{n} \propto n^{-a}$, $a\in(0,1/2)$, and of order $\bigoh(n^{2a-2})$ for $a\in(1/2,1)$.

Finally, we should also note that the symmetry condition \hypref{hyp:symmetry} is not at all necessary to obtain the exponential rate of decay \eqref{eq:conv-prob-exp}.
As we show in Appendix \ref{app:proofs-basic}, the same bound is also obtained under the following slight reinforcement of \hypref{hyp:MSE}:
\begin{enumerate}
[(H2$'$)]
\item
\label{hyp:moments}
\emph{Subexponential error moment growth:}
\begin{equation}
\tag{H2$'$}
\label{eq:moments}
\ex\left[ \smallnorm{\hat\bV(n+1) - \bV(\bQ(n+1))}^{p} \vargiven \bQ(n) \right]
	\leq \frac{p!}{2} \noisedev^{p}
\end{equation}
for some $\noisedev>0$ and for all $p\in\N$.
\end{enumerate}
This hypothesis is also satisfied by a wide range of probability distributions (including all distributions with finite support, Gaussian, sub-Gaussian and sub-exponential tails) and it offers a useful alternative to \hypref{hyp:symmetry} when the only thing that can be estimated is the errors' raw moments.

\smallskip
\paragraph{Links with matrix regularization}
\label{par:review}

The proof of Theorem \ref{thm:conv} relies on stochastic approximation techniques \cite{Ben99} and a deep connection between matrix exponentiation and the von Neumann quantum entropy.
In fact, as we show in the appendix, \eqref{eq:MXL} is closely related to the matrix regularization techniques of \cite{TRW05,KSST12,MB14} for online learning and the mirror descent machinery of \cite{NJLS09,Nes09} for (stochastic) convex programming.
In particular, the ``convergence-in-the-mean'' bound \eqref{eq:conv-mean} is derived in the same way as the corresponding results of \cite{NJLS09,Nes09}, but the techniques developed therein do not suffice for the much stronger almost sure convergence result that we present here.
An in-depth description of mirror descent methods requires the introduction of considerable technical apparatus from convex analysis and lies beyond the scope of this paper;
for a comprehensive account, see instead \cite{NJLS09,Nes09} and references therein.

\subsection{Unbiased gradient estimators}
\label{sec:estimate}

As we mentioned before, the \ac{MXL} algorithm requires users to have access to unbiased estimates $\hat\bV_{k}$ of their individual gradient matrices \eqref{eq:V}.
Our goal in this section will be to describe a process with which the receiver and the transmitters may estimate $\bV$ based at each step on possibly imperfect signal and channel measurements \textendash\ e.g. obtained through the exchange of pilot feedback signals.

The first step will be to estimate the aggregate \emph{signal precision} (inverse covariance) matrix
\begin{equation}
\label{eq:precision}
\txs
\bP
	= \ex[\by \by^{\dag}]^{-1}
	= \bW^{-1}
	= \left( \bI + \insum_{\ell} \bH_{\ell} \bQ_{\ell} \bH_{\ell}^{\dag} \right)^{-1}
\end{equation}
by sampling the signal $\by \in \C^{N}$ at the receiver.
To that end, since the channel is Gaussian, an unbiased estimate for the received signal covariance $\bW = \ex[\by\by^{\dag}]$ can be obtained from a systematically unbiased sample $\{\by_{s}\}_{s=1}^{S}$ of size $S$ by means of the well-known estimator
\(
\hat \bW
	= \frac{1}{S} \insum_{s=1}^{S} \by_{s} \by_{s}^{\dag}
\).%
\footnote{Since the expected value $\ex[\by] = 0$ of $\by$ is known in advance, we do not need to include an $S/(S-1)$ bias correction factor in the estimate of $\bW$.}
However, $\hat\bW^{-1}$ is a \emph{biased} estimate for $\bW^{-1}$, so inverting $\hat\bW$ directly would introduce a systematic error in $\bP$.
Instead, following \cite{And03}, we will consider the bias-adjusted precision matrix estimator:
\begin{equation}
\label{eq:prec-estimate}
\hat\bP
	= \frac{S - N -1}{S} \hat\bW^{-1},
\end{equation}
where $N$ is the number of antennas at the receiver (the dimension of $\by)$ and $\hat\bW = S^{-1} \insum_{s=1}^{S} \by_{s} \by_{s}^{\dag}$ as before.


Consequently, in the absence of perfect \acl{CSIT}, the transmitters must estimate the individual gradient matrices $\bV_{k} = \bH_{k}^{\dag} \bW^{-1} \bH_{k}$ from the broadcast of $\hat\bP$ and using imperfect measurements of their channel matrices $\bH_{k}$.
To that end, if each transmitter takes $S$ independent measurements $\hat\bH_{k,1},\dotsc,\hat\bH_{k,S}$ of his channel matrix (e.g. via independent pilot sampling),
an unbiased Hermitian estimate for $\bV_{k}$ is given by the expression:
\begin{equation}
\label{eq:V-unbiased}
\hat\bV_{k}
	= \frac{1}{S(S-1)}
	\insum_{s\neq s'} \hat\bH_{k,s}^{\dag} \hat\bP  \hat\bH_{k,s'},
\end{equation}
where $\hat\bP$ is the latest estimate of \eqref{eq:prec-estimate} of $\bW^{-1}$ that was broadcast by the receiver.%
\footnote{Obviously, increasing the sample size $S$ decreases the estimator's \ac{MSE} at the cost of computational complexity.}
Indeed,
given that the sampled channel matrix measurements $\hat\bH_{k,s}$ are assumed stochastically independent, we readily obtain:
\begin{flalign}
\ex[\hat\bV_{k}]
	&= \frac{1}{S(S-1)}
	\insum_{s \neq s'} \ex\big[ \hat\bH_{k,s}^{\dag} \hat \bP \hat\bH_{k,s'} \big]
	= \bH_{k}^{\dag} \bW^{-1} \bH_{k},
\end{flalign}
i.e. \eqref{eq:V-unbiased} constitutes an unbiased estimator of $\bV$.

The construction above provides an estimator $\hat\bV$ with $\ex\big[\hat\bV\big] = \bV$, so Assumption \hypref{hyp:zeromean} holds.
As for the variance of $\hat\bV$, \eqref{eq:V-unbiased} can also be used to derive an expression for $\var(\hat\bV)$ in terms of the moments of $\hat\bP$ and $\hat\bH$.
Since the system input and noise are assumed Gaussian, the former are all finite (and Gaussian-distributed) so the finite mean square error hypothesis \hypref{hyp:MSE} boils down to measuring $\bH$ with finite mean squared error \textendash\ a requirement which is easy to achieve.
In fact, if the sample size $S$ is taken large enough, the central limit theorem guarantees significant control on the variance and higher central moments of $\hat\bV$, so even the tighter requirements \eqref{eq:moments} and \hypref{hyp:symmetry} hold.

\begin{remark}
Under \acf{TDD} operation, the estimation process above is greatly facilitated because the estimates of each user's channel matrix $\bH_k$ can be calculated directly at the transmitter via reverse-link pilots.
In the absence of \ac{TDD} (or in the case where the number of antennas at the receiver is massively large, viz. $N^{2} \gg \insum_{k} M_{k}^{2}$), a more efficient solution would be to feed back to each transmitter an unbiased estimate of $\bV_{k}$ that is calculated directly at the receiver.
\end{remark}



\subsection{Asynchronous updates and delays}
\label{sec:AMXL}

Even though the information requirements of \eqref{eq:MXL} are local in nature, the algorithm itself is not fully decentralized because it relies implicitly on a global timer to coordinate the users' update schedule (similarly to \ac{IWF} and \ac{SWF} methods).
To overcome this limitation, we examine here a fully decentralized variant of Algorithm \ref{alg:MXL} where each user updates his signal covariance matrix based on an individual timer and completely independently of other users.

Of course, in this case, the measurements $\hat\bV_{k}$ may suffer from delays and asynchronicities, so the update structure of Algorithm \ref{alg:MXL} must be suitably modified.
To that end, let $n$ denote the $n$-th \emph{overall} update period in the system, let $\play_{n}\subseteq \play$ denote the subset of users who update at this epoch (typically $\abs{\play_{n}}=1$ if users update at random times),
and let $d_{k}(n)$ be the number of periods that have elapsed since the last update of the $k$-th user.
We then obtain the following asynchronous variant of \eqref{eq:MXL}:
\begin{equation}
\label{eq:AMXL}
\tag{MXL-a}
\begin{aligned}
\bY_{k}(n+1)
	&= \bY_{k}(n)
	+ \step_{n_{k}} \, \one(k\in \play_{n}) \cdot \hat{\bV}_{k}(n), \\
\bQ_{k}(n+1)
	&= P_{k} \frac{\exp(\bY_{k}(n+1))}{\tr[\exp(\bY_{k}(n+1))]},
\end{aligned}
\end{equation}
where $n_{k} = \sum_{j=1}^{n} \one(k\in\play_{j})$ denotes the number of updates that have been performed by user $k$ up to epoch $n$ and $\hat\bV_{k}$ is a noisy (and asynchronous) estimate of \eqref{eq:V}:
\begin{equation}
\label{eq:Vest2}
\hat{\bV}_{k}(n)
	= \bH_{k}^\dag \left[ \bI + \insum_{\ell} \bH_{\ell} {\bQ}_{\ell}(n-d_{\ell}(n)) \bH_{\ell}^\dag \right]^{-1} \bH_{k}
	+ \bZ_{k}(n),
\end{equation}
with $\bZ(n)$ satisfying \hypref{hyp:zeromean} and \hypref{hyp:MSE} as before.

By definition, $\bY_{k}(n)$ and $\bQ_{k}(n)$ are updated at the $(n+1)$-th update period if and only if $k\in\play_{n}$, so every user only needs to keep track of his individual update timer.
In this way, we obtain the following decentralized variant of Algorithm \ref{alg:MXL} (shown here for a single, focal transmitter and with step sizes of the form $\step_{n} = \step/n$ for simplicity):

\begin{algorithm}[H]
{%
\sf
\vspace{2pt}
Parameter:
$\step>0$.
\\[2pt]
Initialize:
$n \leftarrow 0$;\;
$\bY \leftarrow 0$;\;
$\bQ \leftarrow \frac{P}{M} \cdot \bI$
\\[2pt]
\Repeat
{%
\ForEach
{UpdateEvent}
{%
	$n \leftarrow n+1$;
	\\[2pt]
	receive estimate $\hat\bV$ of $\bV$;
	\\[2pt]
	update score matrix:
	$\bY \leftarrow \bY + \step/n \cdot \hat \bV$;
	\\[2pt]
	set covariance matrix:
	$\dis\bQ \leftarrow P \exp(\bY) \big/ \tr[ \exp(\bY) ]$;
} 
\vspace{.5ex}
\textbf{until} termination criterion is reached.
} 
} 
\caption{Asynchronous exponential learning (\acs{AMXL}\acused{AMXL}).}
\label{alg:AMXL}
\end{algorithm}


Remarkably, in this asynchronous context (with delayed, imperfect measurements), we still get:
\begin{theorem}
\label{thm:conv-AMXL}
Assume that the users' delay processes $d_{k}(n)$ are bounded \textup(a.s.\textup) and that the set of users $\play_{n}$ that update at step $n$ is a homogeneous recurrent Markov chain \textendash\ i.e. every user updates at a positive rate.
Then, the iterates of Algorithm \ref{alg:AMXL} converge almost surely to $\argmax_{\bQ} \rate(\bQ)$.
\end{theorem}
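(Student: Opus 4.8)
The plan is to reduce the asynchronous recursion \eqref{eq:AMXL} to a single ``master'' stochastic approximation scheme running on the overall clock $n$, and then invoke the ODE method of stochastic approximation exactly as in the proof of Theorem \ref{thm:conv}. Concretely, for each user $k$ write the $\bY_k$-update in the form $\bY_k(n+1) = \bY_k(n) + \gamma_n \big[ b_k(n) + w_k(n) \big]$, where $\gamma_n = \step/n$ is the \emph{common} step-size, $b_k(n) = (\gamma_{n_k}/\gamma_n)\,\one(k\in\play_n)\,\bV_k(\bQ(n))$ is a deterministic-rate ``effective drift,'' and $w_k(n)$ collects (i) the martingale noise $(\gamma_{n_k}/\gamma_n)\,\one(k\in\play_n)\,\bZ_k(n)$ and (ii) the delay perturbation $(\gamma_{n_k}/\gamma_n)\,\one(k\in\play_n)\big[\bH_k^\dag(\bI+\sum_\ell \bH_\ell \bQ_\ell(n-d_\ell(n))\bH_\ell^\dag)^{-1}\bH_k - \bV_k(\bQ(n))\big]$. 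The key structural observation is that $n_k/n \to \rho_k > 0$ almost surely, where $\rho_k$ is the stationary update rate of user $k$ — this is where the homogeneous recurrent Markov chain hypothesis on $\play_n$ enters, via the ergodic theorem for (functionals of) the chain. Since $\gamma_n = \step/n$, the ratio $\gamma_{n_k}/\gamma_n = n/n_k \to 1/\rho_k$, so the Robbins–Monro recursion for $\bY$, after this time change, tracks the \emph{same} mean-field ODE as in Theorem \ref{thm:conv} up to the strictly positive rescaling factors $1/\rho_k$ on each block. These rescalings do not affect the rest points or the Lyapunov structure because the limiting ODE is still a (reparametrized) gradient flow for $\rate$; in particular its only $\omega$-limit set is $\argmax_{\bQ}\rate(\bQ)$.

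The argument then proceeds in the following steps. \textbf{Step 1 (update-rate stabilization).} Using the recurrence and homogeneity of the Markov chain $\play_n$, show $n_k/n \to \rho_k$ a.s.\ with $\rho_k \in (0,1]$, hence $\sum_n \gamma_{n_k} = \infty$ and $\sum_n \gamma_{n_k}^2 < \infty$ for every $k$ (the per-user ``local clocks'' are genuine Robbins–Monro schedules). \textbf{Step 2 (noise is a summable-variance martingale).} Verify that the martingale part of $w_k(n)$ has conditional second moment bounded by a constant times $(n/n_k)^2 \one(k\in\play_n)\,\noisedev^2$ via \hypref{hyp:MSE}, so that $\sum_n \gamma_n^2 \ex[\|w_k^{\mathrm{mart}}(n)\|^2\mid\filter_n] < \infty$ a.s.\ by Step 1; this is precisely the summability needed for the ODE method. \textbf{Step 3 (delays are asymptotically negligible).} Since $\bQ \mapsto \bH_k^\dag(\bI+\sum_\ell \bH_\ell \bQ_\ell \bH_\ell^\dag)^{-1}\bH_k$ is Lipschitz on the compact product $\prod_k \strat_k$, and since the $\bQ$-iterates move by $O(\gamma_n)$ per step (the matrix exponential map is Lipschitz on a neighborhood of the relevant score matrices, and $\bQ_k$ changes only when $k$ updates), a uniformly bounded delay $d_\ell(n) \le D$ gives $\|\bQ_\ell(n-d_\ell(n)) - \bQ_\ell(n)\| = O(\sum_{j=n-D}^{n}\gamma_j) = O(\gamma_n)$; hence the delay perturbation contributes an $O(\gamma_n)$ asymptotically vanishing bias that is absorbed into the ODE method's error term (an ``asymptotically negligible'' perturbation in the sense of \cite{Ben99,Bor08}). \textbf{Step 4 (invoke the ODE method).} Conclude via \cite{Ben99} that the affine interpolation of $\bY(n)$ is almost surely an asymptotic pseudotrajectory of the mean-field flow; identify the limiting dynamics with the (time-rescaled) gradient flow analyzed for Theorem \ref{thm:conv}, for which $\rate$ is a strict Lyapunov function with $\argmax_{\bQ}\rate(\bQ)$ as its unique internally chain-transitive set (this part is verbatim the continuous-time analysis from Appendix \ref{app:proofs-basic}); transport the conclusion back to $\bQ(n)$ through the continuous exponential/trace-normalization map.

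\textbf{Main obstacle.} The delicate point is Step 1 together with the handling of the ratios $\gamma_{n_k}/\gamma_n$: one must control not merely that each user updates infinitely often, but that the \emph{relative} update frequencies stabilize fast enough that the effective step-size ratios $n/n_k$ converge and stay bounded, so that the asynchronous scheme does not degenerate into one with an ill-behaved (e.g.\ non-summable-squared, or non-divergent) effective step-size for some user. This is exactly where ``homogeneous recurrent Markov chain'' is needed rather than merely ``each user updates infinitely often'': it supplies a law of large numbers $n_k/n\to\rho_k>0$ with enough regularity to push through the two summability conditions simultaneously. A secondary subtlety is that the Lipschitz bound on $\bQ$-increments used in Step 3 must be uniform in $n$, which follows because the score matrices $\bY_k(n)$, although unbounded, enter only through $\exp(\bY_k)/\tr[\exp(\bY_k)]$, a map that is globally Lipschitz in the relevant (operator-norm) sense on Hermitian matrices after trace normalization — a fact already exploited in the proof of Theorem \ref{thm:conv}. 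Once these are in place, everything else is a routine adaptation of the synchronous argument.
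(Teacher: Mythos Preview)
Your proposal is correct and takes essentially the same route as the paper: both reduce \eqref{eq:AMXL} to the ODE method, use the ergodicity of the update chain $\play_{n}$ to stabilize the per-user rates $n_{k}/n\to\rho_{k}>0$, identify a per-block rescaled version of the mean dynamics \eqref{eq:MXL-cont}, and observe that positive rescaling leaves the rest points and \ac{ICT} sets unchanged. The paper's argument is terser --- it invokes Theorems~2 and~3 of \cite[Chap.~7]{Bor08} as a black box that handles bounded delays and Markov-modulated update sets simultaneously --- whereas you unpack the delay and martingale contributions by hand in Steps~2--3; both land on the same rate-adjusted flow.

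One small correction: your scaling factor $1/\rho_{k}$ is not quite right. You track $\gamma_{n_{k}}/\gamma_{n}=n/n_{k}\to1/\rho_{k}$, but your drift $b_{k}(n)$ also carries the indicator $\one(k\in\play_{n})$, whose Ces\`aro average along the chain is $\rho_{k}$; on the master clock these cancel. The paper, via Borkar's convention, writes the limiting ODE as $\dot\bY_{k}=\eta_{k}\bV_{k}$ with $\eta_{k}=\lim n_{k}/n$. Since every version differs only by a strictly positive per-block multiplier, your conclusion survives, but the averaging over $\one(k\in\play_{n})$ is precisely the Markov-chain averaging step that makes the asynchronous reduction nontrivial, and it should not be dropped when you compute the limiting drift.
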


\begin{proof}
See Appendix \ref{app:proofs-variants}.
\end{proof}

Obviously, Algorithm \ref{alg:AMXL} enjoys the same implementation properties as Algorithm \ref{alg:MXL}, and, in addition:
\begin{enumerate}
[(P1)]
\setcounter{enumi}{4}
\item
\emph{Asynchronicity:}
there is no need for a global update timer to synchronize the network's wireless users.
\end{enumerate}
In particular, the criteria that trigger an \textsf{UpdateEvent} could be completely arbitrary, so \eqref{eq:AMXL} is more suitable for scenarios where there can be no coordination between the transmitters' update periods.
Otherwise, if an \textsf{UpdateEvent} is triggered simultaneously (e.g. if it is triggered by the receiver's broadcasts), Algorithm \ref{alg:AMXL} reduces to synchronous \ac{MXL} (Alg.~\ref{alg:MXL}):
in this case, the algorithm's convergence can be greatly sped up because more users update per period.

\subsection{Eigenvalues, eigenvectors and learning}
\label{sec:EXL}

We close this section by describing the evolution of the users' transmit eigenvectors and eigenvalues under \eqref{eq:MXL} and using this description to propose an alternative implementation of Algorithm \ref{alg:MXL}.
The key ingredient of our analysis is the following proposition:

\begin{proposition}
\label{prop:MXL-eig}
Let $\{q_{k\alpha},\bu_{k\alpha}\}_{\alpha=1}^{M_{k}}$ be a smooth eigen-system for $\bQ_{k}$ and let $V_{\alpha\beta}^{k} \equiv \bu_{k\alpha}^{\dag} \bV_{k} \bu_{k\beta}$.
Then, the iterates of Algorithm \ref{alg:MXL} track the mean dynamics:
\begin{subequations}
\label{eq:MXL-eig}
\begin{flalign}
\label{eq:MXL-eigenvalues}
\dot q_{k\alpha}
	&=\txs q_{k\alpha} \left(V_{\alpha\alpha}^{k} - P_{k}^{-1}\insum_{\beta=1}^{M_{k}} q_{k\beta} V_{\beta\beta}^{k}\right),
	\\
\label{eq:MXL-eigenvectors}
\dot \bu_{k\alpha}
	&= \insum_{\beta\neq\alpha} V_{\beta\alpha}^{k}\, \left(\log q_{k\alpha} - \log q_{k\beta} \right)^{-1} \bu_{k\beta}.
\end{flalign}
\end{subequations}
\end{proposition}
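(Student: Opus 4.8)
The plan is to obtain \eqref{eq:MXL-eig} as the expression, in eigen-coordinates, of the mean-field flow already associated with \eqref{eq:MXL} in the proof of Theorem~\ref{thm:conv}. Recall from that proof that the interpolated trajectory of the score matrices $\bY_{k}(n)$ generated by Algorithm~\ref{alg:MXL} is (almost surely) an asymptotic pseudotrajectory of the deterministic dynamics $\dot\bY_{k} = \bV_{k}(\bQ)$, $\bQ_{k} = P_{k}\exp(\bY_{k})/\tr[\exp(\bY_{k})]$; since the Gibbs map $\bY_{k}\mapsto\bQ_{k}$ is smooth, the iterates $\bQ_{k}(n)$ likewise track the image of this flow. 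It therefore suffices to rewrite the latter in terms of a smooth eigen-system $\{q_{k\alpha},\bu_{k\alpha}\}_{\alpha=1}^{M_{k}}$ of $\bQ_{k}$, which is exactly the content of \eqref{eq:MXL-eigenvalues} and \eqref{eq:MXL-eigenvectors}.

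First I would observe that $\bQ_{k}$ is a matrix function of $\bY_{k}$, so the two commute and can be diagonalized in a common eigenbasis $\bu_{k\alpha}$ with eigenvalues $y_{k\alpha}$ and $q_{k\alpha} = P_{k}e^{y_{k\alpha}}\big/\sum_{\beta}e^{y_{k\beta}}$ respectively; this basis is smooth along the trajectory precisely where the eigenvalues remain simple, which is the standing hypothesis of the proposition. Differentiating $y_{k\alpha} = \bu_{k\alpha}^{\dag}\bY_{k}\bu_{k\alpha}$ and using $\bY_{k}\bu_{k\alpha} = y_{k\alpha}\bu_{k\alpha}$, the terms involving $\dot\bu_{k\alpha}$ combine into $y_{k\alpha}\tfrac{d}{dt}\|\bu_{k\alpha}\|^{2} = 0$, leaving $\dot y_{k\alpha} = \bu_{k\alpha}^{\dag}\dot\bY_{k}\bu_{k\alpha} = V_{\alpha\alpha}^{k}$. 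Projecting the differentiated eigen-relation $\dot\bY_{k}\bu_{k\alpha} + \bY_{k}\dot\bu_{k\alpha} = \dot y_{k\alpha}\bu_{k\alpha} + y_{k\alpha}\dot\bu_{k\alpha}$ onto $\bu_{k\beta}$ for $\beta\neq\alpha$ yields the standard first-order correction $\dot\bu_{k\alpha} = \sum_{\beta\neq\alpha}(y_{k\alpha}-y_{k\beta})^{-1}V_{\beta\alpha}^{k}\,\bu_{k\beta}$, once the phase/normalization gauge $\bu_{k\alpha}^{\dag}\dot\bu_{k\alpha} = 0$ has been fixed.

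It then remains to differentiate the Gibbs relation in logarithmic form, $\log q_{k\alpha} = \log P_{k} + y_{k\alpha} - \log\sum_{\beta}e^{y_{k\beta}}$. This yields $\dot q_{k\alpha}/q_{k\alpha} = \dot y_{k\alpha} - P_{k}^{-1}\sum_{\beta}q_{k\beta}\dot y_{k\beta} = V_{\alpha\alpha}^{k} - P_{k}^{-1}\sum_{\beta}q_{k\beta}V_{\beta\beta}^{k}$, which is \eqref{eq:MXL-eigenvalues}; the same identity also gives $y_{k\alpha} - y_{k\beta} = \log q_{k\alpha} - \log q_{k\beta}$, and substituting this denominator into the eigenvector equation above produces \eqref{eq:MXL-eigenvectors}.

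The main obstacle is, as usual in eigen-perturbation arguments, the lack of smoothness of the eigen-decomposition where two eigenvalues of $\bQ_{k}$ (equivalently, of $\bY_{k}$) coalesce: near such points the first-order corrections blow up — which is exactly the $(\log q_{k\alpha} - \log q_{k\beta})^{-1}$ singularity visible in \eqref{eq:MXL-eigenvectors} — so the identities above are valid only on the stretches of trajectory along which the chosen eigen-system is smooth, i.e. under the hypothesis stated in the proposition. Beyond this, the remaining care is purely in bookkeeping: fixing the eigenvector gauge consistently, and keeping track of which of $V_{\alpha\beta}^{k}$ and $V_{\beta\alpha}^{k}$ enters each term.
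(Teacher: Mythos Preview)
Your argument is correct and takes a genuinely different, more elementary route than the paper's. The paper first computes $\dot\bQ_{k}$ directly via the Fréchet derivative of the matrix exponential (Wilcox's integral formula $\tfrac{d}{dt}\exp(\bY) = \int_{0}^{1}\exp((1-s)\bY)\,\dot\bY\,\exp(s\bY)\,ds$), obtaining $\dot\bQ = \int_{0}^{1}\bQ^{1-s}\bV\bQ^{s}\,ds - \bQ\tr[\bV\bQ]$; it then projects this onto the eigenbasis of $\bQ$ and carries out the $s$-integral to produce the factor $(q_{\alpha}-q_{\beta})/(\log q_{\alpha}-\log q_{\beta})$, which, combined with standard eigen-perturbation applied to $\bQ$, yields \eqref{eq:MXL-eig}. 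Your approach sidesteps the Fréchet derivative entirely by exploiting the observation that $\bQ_{k}$ and $\bY_{k}$ commute and hence share eigenvectors: you do first-order perturbation theory on $\bY_{k}$ (where $\dot\bY_{k}=\bV_{k}$ is trivial), and then translate to $\bQ_{k}$ via the \emph{scalar} Gibbs relation, the denominator $y_{k\alpha}-y_{k\beta}=\log q_{k\alpha}-\log q_{k\beta}$ appearing for free. This buys you a shorter derivation with no matrix calculus; the paper's route, on the other hand, produces an explicit expression for $\dot\bQ$ itself (not only its eigen-data), which it then reuses both for the discrete-to-continuous tracking step (via a direct Taylor expansion of $\bQ(n+1)$) and in the proof of Theorem~\ref{thm:conv-EXL}.
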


\begin{proof}
See Appendix \ref{app:proofs-variants}.
\end{proof}

\smallskip

The precise sense in which $\bQ(n)$ ``tracks'' the mean dynamics \eqref{eq:MXL-eig} is explained in Appendix \ref{app:proofs-dynamics};
for our purposes, the most important consequence of Proposition \ref{prop:MXL-eig} is that \eqref{eq:MXL-eig} leads to the following variant of Algorithm \ref{alg:MXL}:

\begin{algorithm}[H]
{%
\sf
\vspace{2pt}
Parameter:
decreasing step-size sequence $\step_{n}$
\\
Initialize:
$n \leftarrow 0$;\;
$q_{k\alpha}$;\;
$\bu_{k\alpha}$
\\[2pt]
\Repeat
{%
$n \leftarrow n+1$;
\\
\ForEach
{user $k \in \play$}
{%
	measure
	$\bV_{k}$;
	\\[2pt]
	update eigenvalues:
	$q_{k\alpha}
	\leftarrow
	q_{k\alpha} + \step_{n} q_{k\alpha} \left(
	V_{\alpha\alpha}^{k} - P_{k}^{-1}\insum_{\beta=1}^{M_{k}} q_{k\beta} V_{\beta\beta}^{k}
	\right)$;
	\\[2pt]
	update eigenvectors:
	$\bu_{k\alpha}
	\leftarrow
	\bu_{k\alpha} + \step_{n}\,\insum_{\beta\neq\alpha} V_{\beta\alpha}^{k} (\log q_{k\alpha} - \log q_{k\beta})^{-1} 	\bu_{k\beta}$;
	\\[2pt]
	correct roundoff errors:
	$\bu \leftarrow \mathtt{Orthonormal}(\bu)$;
	\\[2pt]
	set covariance matrix
	$\bQ_{k} \leftarrow \insum_{\alpha=1}^{M_{k}} q_{k\alpha} \bu_{k\alpha} \bu_{k\alpha}^{\dag}$;
} 
\vspace{.5ex}
\textbf{until} termination criterion is reached.} 
} 
\caption{Eigen-based exponential learning (\acs{EXL}).\acused{EXL}}
\label{alg:EXL}
\end{algorithm}

As in the case of the original \ac{MXL} algorithm, we then obtain:

\begin{theorem}
\label{thm:conv-EXL}
Assume that Algorithm \ref{alg:EXL} is run with sufficiently small, nonincreasing step-sizes $\step_{n}$ such that $\sum_{n} \step_{n}^{2} < \sum_{n} \step_{n} = \infty$.
Then, $\bQ(n)$ converges almost surely to $\argmax_{\bQ} \rate(\bQ)$.
\end{theorem}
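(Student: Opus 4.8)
The plan is to show that Algorithm \ref{alg:EXL} is, up to a smooth change of variables, the \emph{same} stochastic approximation scheme as Algorithm \ref{alg:MXL}, so that the almost sure convergence established in Theorem \ref{thm:conv} transfers directly. Concretely, I would argue that the map $\bY_{k}\mapsto\bQ_{k} = P_{k}\exp(\bY_{k})/\tr[\exp(\bY_{k})]$ sends the \eqref{eq:MXL} recursion into the spectral recursion driving Algorithm \ref{alg:EXL}; equivalently, by Proposition \ref{prop:MXL-eig}, both algorithms are discretizations (à la Robbins--Monro) of the \emph{same} mean ODE \textendash\ the replicator-type eigenvalue flow \eqref{eq:MXL-eigenvalues} coupled with the eigenvector rotation \eqref{eq:MXL-eigenvectors}. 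Since \eqref{eq:MXL} is already known to converge a.s.\ to $\argmax_{\bQ}\rate(\bQ)$ (a singleton, since $\rate$ is strictly concave on the relevant faces), it suffices to verify that the Euler-type update written in Algorithm \ref{alg:EXL} is a valid stochastic approximation of that ODE with vanishing error, at which point the ODE method of \cite{Ben99} closes the argument.

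The key steps, in order, are: (i) fix a user $k$ and write $\bQ_{k}$ in its spectral form $\bQ_{k}=\sum_{\alpha}q_{k\alpha}\bu_{k\alpha}\bu_{k\alpha}^{\dag}$; differentiate the eigenvalues and eigenvectors of $\bQ_{k}(t)$ along the continuous-time flow $\dot{\bY}_{k}=\bV_{k}$, $\bQ_{k}=P_{k}\exp(\bY_{k})/\tr[\exp(\bY_{k})]$, using first-order eigen-perturbation theory, to recover exactly \eqref{eq:MXL-eig} \textendash\ this is the content of Proposition \ref{prop:MXL-eig}, which I may invoke. (ii) Observe that Algorithm \ref{alg:EXL} performs the explicit Euler step for \eqref{eq:MXL-eig} with step $\step_{n}$, followed by a re-orthonormalization $\bu\leftarrow\mathtt{Orthonormal}(\bu)$; the discretization discrepancy between one such step and the time-$\step_{n}$ flow of the ODE is $O(\step_{n}^{2})$ uniformly on the compact feasible set, and the orthonormalization correction is of the same order (it only removes the $O(\step_{n}^{2})$ loss of orthonormality created by the linearized rotation). (iii) Bundle the measurement noise: replacing $\bV_{k}$ by $\hat{\bV}_{k}$ in the eigen-update introduces a martingale-difference perturbation whose conditional second moment is bounded (via \hypref{hyp:MSE} and compactness of $\strat_{k}$), so the hypotheses of the ODE method \cite{Ben99,Bor08} are met with $\sum_{n}\step_{n}=\infty$, $\sum_{n}\step_{n}^{2}<\infty$. (iv) Conclude that the interpolated trajectory of Algorithm \ref{alg:EXL} is an \acl{APT} of the flow \eqref{eq:MXL-eig}; since that flow is the spectral image of the \eqref{eq:MXL} flow, which converges to the unique maximizer $\bQ^{\ast}$ of $\rate$, the iterates $\bQ(n)$ converge a.s.\ to $\argmax_{\bQ}\rate(\bQ)$.

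The main obstacle is step (ii), and specifically the behavior of the eigenvector update \eqref{eq:MXL-eigenvectors} near eigenvalue crossings: the factor $(\log q_{k\alpha}-\log q_{k\beta})^{-1}$ blows up when two eigenvalues coalesce, so the vector field is not globally Lipschitz on $\strat_{k}$ and the ``sufficiently small step size'' hypothesis in the statement is doing real work. I would handle this by noting that (a) the \emph{matrix} form \eqref{eq:MXL} is perfectly well behaved at such points \textendash\ the singularity is an artifact of the spectral coordinates, not of the dynamics \textendash\ and (b) along the flow, degenerate eigenvalues are either preserved (if they start equal they stay equal, because the block stays proportional to the identity) or generically avoided; in either case the effective dynamics on each eigen-block is smooth. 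Making this rigorous amounts to passing to the matrix variable $\bY_{k}$ whenever the spectral chart degenerates, carrying out the stochastic-approximation estimates there, and mapping back \textendash\ which is exactly why it is cleaner to prove convergence for \eqref{eq:MXL} first and \emph{deduce} Theorem \ref{thm:conv-EXL} by the change of variables, rather than attacking \eqref{eq:MXL-eig} head-on. The smallness condition on $\step_{n}$ then just guarantees that each Euler step stays within the chart where these estimates are valid.
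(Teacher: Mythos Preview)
Your proposal is correct and follows essentially the same route as the paper: show that the Euler step of Algorithm~\ref{alg:EXL} plus the $\mathtt{Orthonormal}$ correction differs from the eigen-dynamics \eqref{eq:MXL-eig} by an $O(\step_{n}^{2})$ term, invoke the \ac{APT} machinery of \cite{Ben99}, and then pull convergence back from the matrix flow \eqref{eq:MXL-cont} via Proposition~\ref{prop:MXL-eig}. Two small caveats: the theorem is stated for \emph{noiseless} measurements (so your step~(iii) is extraneous here; the paper explicitly defers the noisy case to random step-size techniques), and $\rate$ need not be strictly concave, so you should speak of convergence to the solution \emph{set} $\argmax_{\bQ}\rate(\bQ)$ rather than a unique maximizer.
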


We close this section with a few remarks on Algorithm \ref{alg:EXL}:

\setcounter{paragraph}{0}

\smallskip
\paragraph{The orthonormalization step}
\label{par:orthonormalization}

Even though the eigenvector dynamics \eqref{eq:MXL-eigenvectors} preserve orthonormality, Algorithm \ref{alg:EXL} introduces an $\bigoh(\step_{n}^{2})$ round-off error to orthogonality due to discretization.
The call to $\mathtt{Orthonormal}$ performs a basis orthonormalization and it is intended to correct that error in order to yield a covariance matrix $\bQ_{k}$ that satisfies the feasibility constraints \eqref{eq:strat} of \eqref{eq:RM}:
like matrix exponentiation, orthonormalization has the same complexity as matrix multiplication (fast Coppersmith\textendash Winograd methods \cite{DS13} provide an $\bigoh(M_{k}^{2.373})$ bound), so this does not increase the algorithm's (polynomial) complexity.

\smallskip
\paragraph{Noisy measurements}
\label{par:noise-EXL}

Theorem \ref{thm:conv-EXL} has been stated for simplicity for noiseless measurements.
In the case of noisy measurements, the step-size of the algorithm must be tuned adaptively so that $q_{k\alpha} \geq 0$;
it is not hard to do so by using the random step-size techniques of \cite{Ben99}, but we chose to focus on the noiseless case for presentational clarity.

\section{The Case of Fast-Fading Channels}
\label{sec:fading}

In the presence of fading, the users' channel matrices $\bH_{k}$ evolve stochastically over time at a rate which is much faster than the characteristic length of each transmission block;
as a result, the static sum rate function $\rate$ of \eqref{eq:rate} is no longer relevant. 
In this case,
the users' achievable sum rate for fixed $\bQ$ under fast fading is given by the ergodic average \cite{GV97,Tel99}:
\begin{equation}
\label{eq:ergrate}
\txs
\ergrate(\bQ)
	= \ex_{\bH} \left[ \log\det\left(\bI + \insum_{k} \bH_{k} \bQ_{k} \bH_{k}^{\dag} \right) \right],
\end{equation}
where the expectation is now taken with respect to the law of $\bH$ (assumed here to follow a stationary, ergodic process).
Accordingly, we obtain the ergodic rate maximization problem for fast-fading channels:
\begin{equation}
\label{eq:ERM}
\tag{ERM}
\begin{aligned}
\text{maximize}
	&\quad
	\ergrate(\bQ),
	\\
\text{subject to}
	&\quad
	\bQ_{k}\in\strat_{k},
	\;
	k =1,\dotsc,K,
\end{aligned}
\end{equation}
where the users' feasible sets $\strat_{k}$ are defined as in \eqref{eq:strat}:
$\strat_{k} = \{\bQ_{k}\mgeq0: \tr(\bQ_{k}) = P_{k}\}$.%
\footnote{Perhaps more appropriately for the case of interest, the above expression also holds over the long term for block-fading channels, in which case the channel is essentially fixed over each transmission length during  which the instantaneous rate is used.}

Since expectation preserves convexity, the ergodic rate maximization problem \eqref{eq:ERM} remains concave \textendash\ in fact, it is straightforward to show that \eqref{eq:ERM} is \emph{strictly} concave \cite{BLDH10}.
However, given that the integration over the law of $\bH$ is typically impossible to carry out, calculating the ergodic gradient $\bV_{\erg} = \nabla \ergrate$ of $\ergrate$ is a likewise impractical task.
Thus, instead of relying on intricate analytic calculations (that require substantial computation capabilities and a good deal of knowledge regarding the channels' statistics), we will consider the same sequence of events as in the case of static channels:
\begin{enumerate}
\item
At every update period $n=1,2,\dotsc$, each user $k\in\play$
gets an estimate $\hat\bV_{k}(n)$ of the matrix
\begin{equation}
\label{eq:Verg}
\bV_{k}(n)
	= \bH_{k}^{\dag}(n)
	\left[ \bI + \insum_{\ell} \bH_{\ell}(n) \bQ_{\ell}(n) \bH_{\ell}^{\dag}(n) \right]^{-1}
	\bH_{k}(n),
\end{equation}
where $\bH_{k}(n)$ denotes the instantaneous realization of the channel matrix of user $k$ at period $n$.

\item
Users update their signal covariance matrices according to the recursion \eqref{eq:MXL} and the process repeats until a termination criterion is reached.
\end{enumerate}

Formally, writing $\bZ_{k}(n) = \hat\bV_{k}(n) - \ex\left[\bV_{k}(n)\right]$ for the difference between the users' observed estimate $\hat\bV_{k}(n)$ and the expected value of \eqref{eq:Verg}, we will make the same statistical hypotheses for $\bZ$ as in the static regime \textendash\ though we should note here that fluctuations are now due to \emph{both measurement errors and the channels' inherent variability.}
Quite remarkably, despite the change of objective function, we obtain the following convergence result for fast-fading channels:

\begin{theorem}
\label{thm:conv-erg}
Assume that Algorithm \ref{alg:MXL} is run with nonincreasing step-sizes $\step_{n}$ such that $\sum_{n} \step_{n}^{2} < \sum_{n} \step_{n} = \infty$ and noisy measurements $\hat\bV(n)$ satisfying hypotheses \eqref{eq:zeromean} and \eqref{eq:MSE} with respect to \eqref{eq:Verg}.
Then, $\bQ(n)$ converges almost surely to the solution of the ergodic rate maximization problem \eqref{eq:ERM};
moreover, the conclusions of Theorem \ref{thm:conv} for an arbitrary nonincreasing step-size sequence $\step_{n}$ also hold with the static sum rate $\rate$ replaced by the ergodic sum rate $\ergrate$.
\end{theorem}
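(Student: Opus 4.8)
The strategy is to recognise that, from the algorithm's standpoint, the fast-fading problem \eqref{eq:ERM} is formally identical to the static problem \eqref{eq:RM}: one replaces the potential $\rate$ by the ergodic potential $\ergrate$ and absorbs the channels' intrinsic variability into the measurement noise. The recursion \eqref{eq:MXL} never accesses the objective directly --- it consumes only the stream of feedback matrices $\hat\bV(n)$ --- so once these are exhibited as conditionally unbiased, mean-square-bounded estimates of $\nabla\ergrate(\bQ(n))$, the iteration becomes \emph{literally} the same matrix-exponential stochastic approximation scheme analysed in the proof of Theorem~\ref{thm:conv}, now drifting towards $\argmax_{\bQ}\ergrate$.

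The one new analytic ingredient is the identification $\nabla\ergrate=\bV_{\erg}$. The gradient of the integrand $\phi(\bH,\bQ)=\log\det\big(\bI+\insum_{\ell}\bH_{\ell}\bQ_{\ell}\bH_{\ell}^{\dag}\big)$ with respect to $\bQ_{k}$ is exactly the matrix $\bH_{k}^{\dag}\big(\bI+\insum_{\ell}\bH_{\ell}\bQ_{\ell}\bH_{\ell}^{\dag}\big)^{-1}\bH_{k}$ appearing in \eqref{eq:Verg}, and it is bounded in norm by $\smallnorm{\bH_{k}}^{2}$ uniformly over $\bQ$ in a neighbourhood of the compact set $\strat=\prod_{k}\strat_{k}$, since $\big(\bI+\insum_{\ell}\bH_{\ell}\bQ_{\ell}\bH_{\ell}^{\dag}\big)^{-1}\mleq\bI$. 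Hence, under the mild integrability condition $\ex_{\bH}[\smallnorm{\bH_{k}}^{2}]<\infty$ (satisfied by every standard fading law), dominated convergence licenses differentiation under the expectation and gives $\nabla_{\bQ_{k}}\ergrate(\bQ)=\ex_{\bH}\big[\bH_{k}^{\dag}\big(\bI+\insum_{\ell}\bH_{\ell}\bQ_{\ell}\bH_{\ell}^{\dag}\big)^{-1}\bH_{k}\big]$; in particular $\ergrate$ is smooth with gradient bounded on $\strat$, and it is concave --- being an expectation of concave functions --- and in fact strictly concave \cite{BLDH10}. Consequently $\ex[\bV_{k}(n)]$, the expectation of \eqref{eq:Verg} over the channel law with the current iterate $\bQ(n)$ held fixed, is precisely $\nabla_{\bQ_{k}}\ergrate(\bQ(n))$, so that hypotheses \eqref{eq:zeromean} and \eqref{eq:MSE}, imposed on $\bZ_{k}(n)=\hat\bV_{k}(n)-\ex[\bV_{k}(n)]$, assert exactly that $\hat\bV$ is a conditionally unbiased, mean-square-bounded estimate of $\nabla\ergrate$ --- which is all that the proof of Theorem~\ref{thm:conv} ever used about $\hat\bV$.

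With this in hand, the proof of Theorem~\ref{thm:conv} transfers essentially verbatim. Its continuous-time mean dynamics converge to $\argmax\ergrate$ by the same Lyapunov argument, now with $\ergrate$ in the role of $\rate$ and its concavity forcing the rest points of those dynamics to coincide with the (unique) solution of \eqref{eq:ERM}; the ODE method of stochastic approximation \cite{Ben99}, together with the martingale strong law applied to $\bZ(n)$, then lifts this to the discrete iterates under $\sum_{n}\step_{n}^{2}<\sum_{n}\step_{n}=\infty$. Likewise, the in-mean and large-deviation bounds \eqref{eq:conv-mean}--\eqref{eq:conv-prob-exp} used only concavity of the objective, the mean-square bound on the estimated gradient (entering through $L^{2}=\insum_{k}P_{k}^{2}V_{k}^{2}$) and the relative-entropy geometry of $\strat$, none of which is special to the static sum rate, so they carry over with $\rate\to\ergrate$ and $\rate_{\max}\to\max_{\bQ}\ergrate(\bQ)$.

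The main obstacle, such as it is, is the conceptual point underpinning the second paragraph: one must be confident that --- because the fading timescale is much shorter than the update period --- the channels' fluctuations enter $\hat\bV$ as genuine martingale-difference noise (the content of \eqref{eq:zeromean}--\eqref{eq:MSE}) rather than as a systematic bias of the drift, and that differentiation under the expectation is legitimate. Once these are granted, everything else is bookkeeping, and in Appendix~\ref{app:proofs-variants} I would simply flag the places where ``$\rate$'' is to be read as ``$\ergrate$'' and ``$\bV_{k}(\bQ)$'' as ``$\bV_{\erg,k}(\bQ)=\ex_{\bH}[\bV_{k}]$''.
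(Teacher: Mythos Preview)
Your proposal is correct and follows essentially the same route as the paper: identify $\nabla_{\bQ_{k}}\ergrate=\ex_{\bH}[\bV_{k}]$ (the paper invokes \cite{Str65} where you argue dominated convergence directly), observe that the hypotheses \eqref{eq:zeromean}--\eqref{eq:MSE} on $\bZ$ then make $\hat\bV$ an unbiased, mean-square-bounded estimate of $\nabla\ergrate$, and carry over the continuous-time Lyapunov argument and the stochastic-approximation machinery of Theorem~\ref{thm:conv} verbatim with $\rate\to\ergrate$. The only cosmetic discrepancy is that the paper places this in Appendix~\ref{app:proofs-ergodic} rather than \ref{app:proofs-variants}.
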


\begin{proof}
See Appendix \ref{app:proofs-ergodic}.
\end{proof}

In view of Theorem \ref{thm:conv-erg}, we see that Algorithm \ref{alg:AMXL} enjoys the additional property:
\begin{enumerate}
[(P1)]
\setcounter{enumi}{5}
\item
\emph{Flexibility:}
the \ac{MXL} algorithm can be applied ``as-is'' in both static and fast-fading channels.
\end{enumerate}

In particular, the same convergence rate and large deviation estimates that were derived for static channels in the previous section (cf. the remarks following Theorem \ref{thm:conv}) also carry over to the fast-fading regime \textendash\ as does the analysis of Secs. \ref{sec:estimate}, \ref{sec:AMXL} and \ref{sec:EXL} for the users' measurement process and for the asynchronous and eigen-based variants of \ac{MXL} respectively.
The only difference here is that the variance $\sigma$ that appears e.g. in \eqref{eq:conv-prob} and \eqref{eq:conv-prob-exp} is not only due to imperfections in the estimation process of $\bV$, but also stems from the inherent variability of the system's channels due to fast-fading.
We will explore this issue in the following section.

\section{Numerical Simulations}
\label{sec:numerics}

\begin{figure*}[t]
\centering
\subfigure
[Normalized throughput over time for $K=20$ users.]
{\includegraphics[width=.49\textwidth]{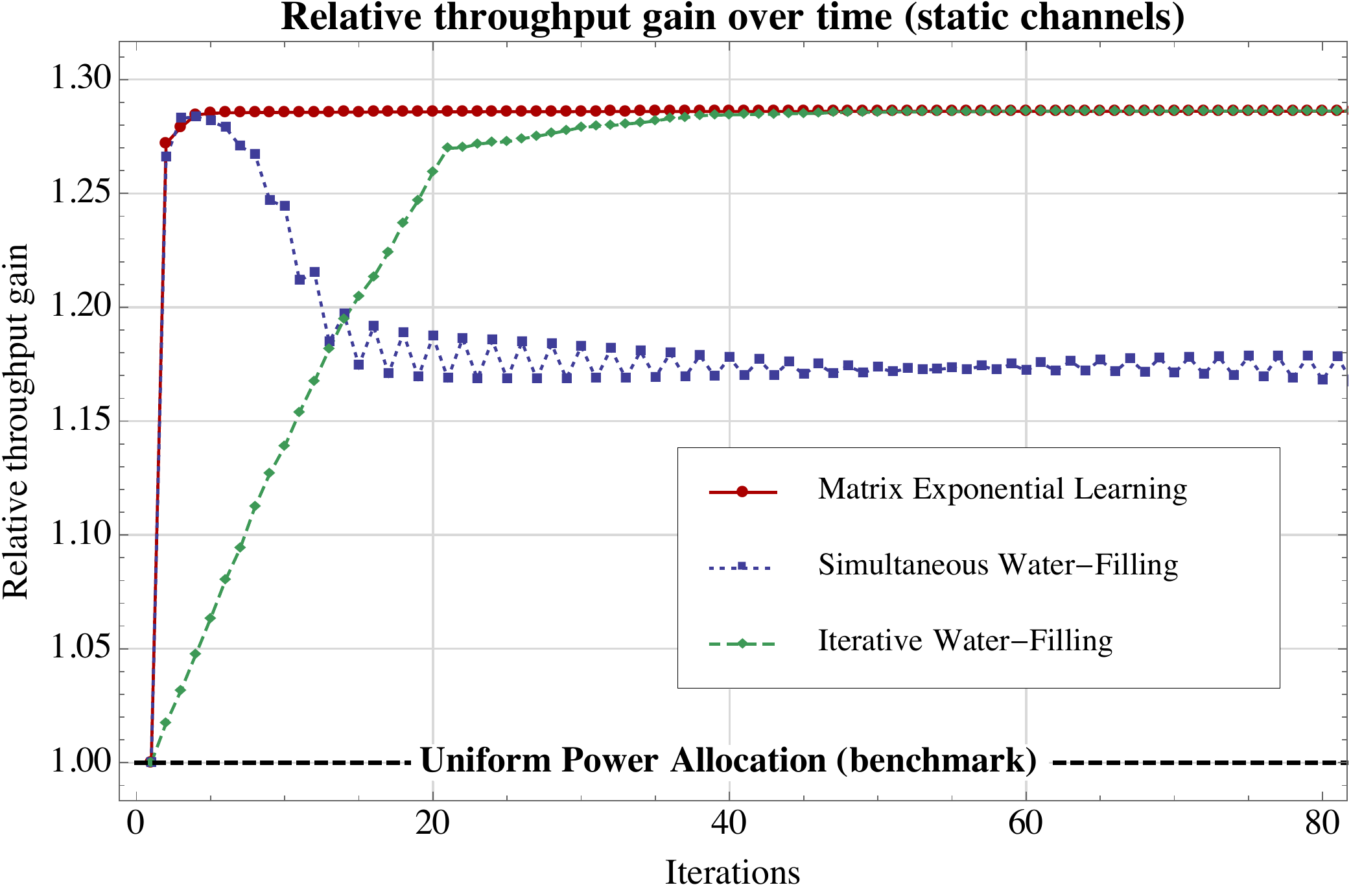}}
\hfill
\subfigure
[Normalized throughput over time for $K=50$ users.]
{\label{fig:regret-rand}
\includegraphics[width=.49\textwidth]{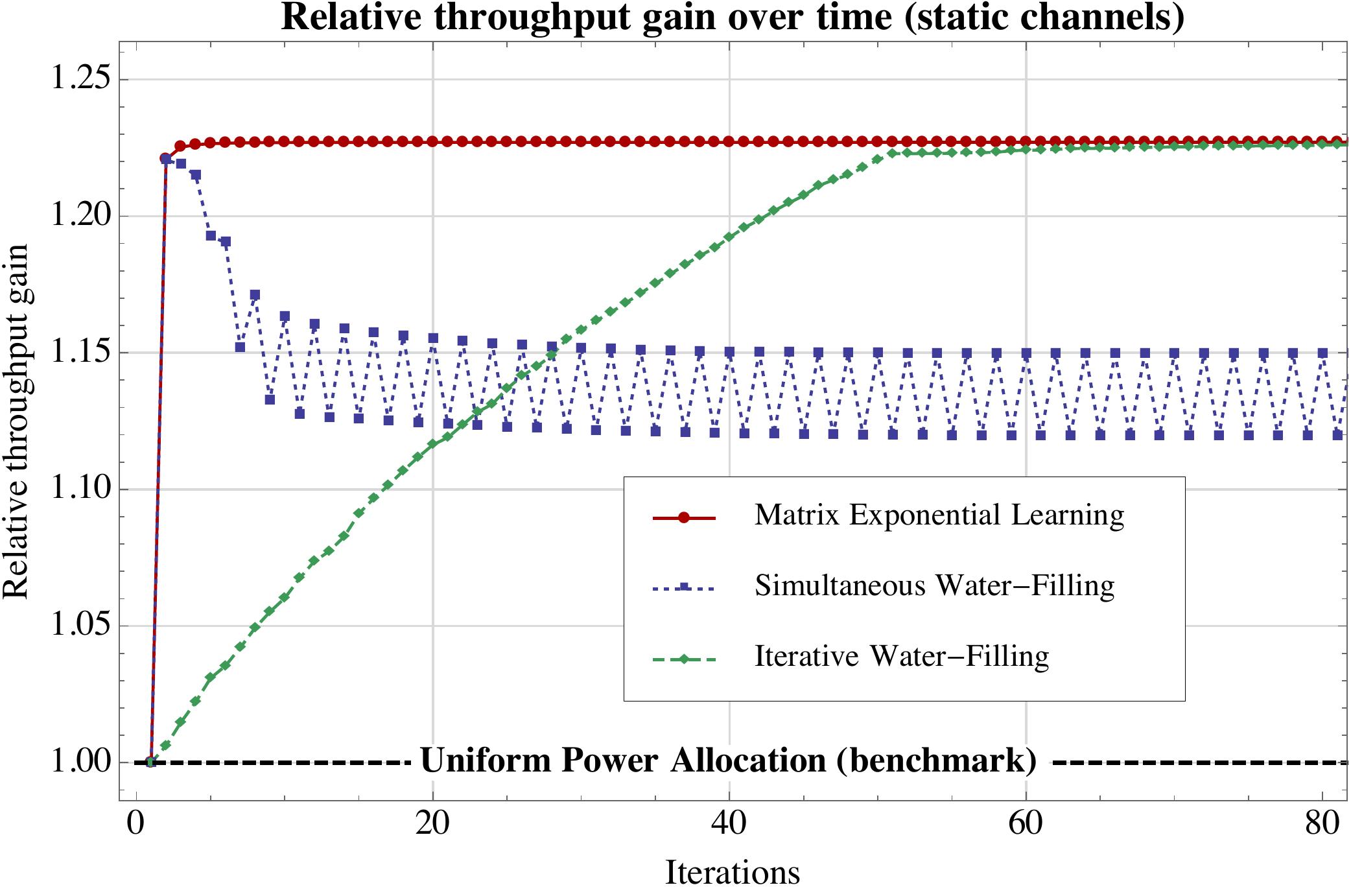}}
\caption{Comparison of \acf{MXL} to \acf{WF} methods.
The classical \ac{IWF} algorithm converges relatively slowly (roughly within $\bigoh(K)$ iterations) because only one user updates per cycle;
the \ac{SWF} variant is much faster (because all users updates simultaneously), but it may fail to converge due to the appearance of best-response cycles in the update process.
By contrast, we see that the \ac{MXL} algorithm converges within a few iterations, even for large numbers of users.}
\label{fig:comp-static}
\end{figure*}

To validate the theoretical analysis of the previous sections and to assess the performance of the \ac{MXL} algorithm (Alg.~\ref{alg:MXL}) in practical scenarios, we conducted extensive numerical simulations from which we illustrate here a selection of the most representative cases.

First, in Figure \ref{fig:comp-static}, we investigate the convergence speed of \ac{MXL} as a function of the number of wireless transmitters, using existing \acf{WF} methods as a benchmark.
To that end, we simulated a multi-user uplink \ac{MIMO} system consisting of a single receiver with $N = 24$ antennas and different numbers of wireless transmitters, each with a random number of transmit antennas, chosen randomly between $2$ and $8$ (due to space limitations, we only present here the case of $K=20$ and $K=50$ users).
The users' channel matrices $\bH_{k}$ were then drawn from a complex Gaussian distribution at the outset of the process and were kept fixed throughout the transmission horizon.%
\footnote{For simplicity, we neglect effects of path-loss in this simulation.
This treatment is reasonable if we assume that users ``invert'' their pathloss function by appropriately compensating their total transmission power level $P_{k}$ \cite{HG08}.}

For comparison purposes, we ran Algorithm \ref{alg:MXL} with constant step size alongside the classical iterative \acl{WF} algorithm proposed in \cite{YRBC04} and the simultaneous variant of \cite{SPB06}, initializing all methods with a uniform power allocation profile that assigns the same power to all transmit antennas (the benchmark case).
The algorithms' performance over time was then assessed by plotting the normalized throughput gain
\begin{equation}
\label{eq:efficiency}
r_{n}
	= \rate_{n} \big/ \rate_{0},
\end{equation}
where $\rate_{n}$ denotes the system's sum rate at the $n$-th iteration of the algorithm,
and the benchmark rate $\rate_{0}$ denotes the system's sum rate when all users employ a uniform beamforming policy.%

As can be seen in Fig.~\ref{fig:comp-static}, the \ac{MXL} algorithm attains the system's sum capacity within a few iterations (effectively, within a \emph{single} iteration for $K=50$ users).%
\footnote{Alternatively, in the game-theoretic context of \eqref{eq:Nash}, this implies that the system's users reach a unilaterally stable Nash equilibrium.}
This convergence behavior represents a marked improvement over traditional \ac{WF} methods, even in moderately-sized systems with $K = 20$ users:
on the one hand, \ac{IWF} is significantly slower than \ac{MXL} (it requires $\bigoh(K)$ iterations to achieve the same performance level as the first iteration of \ac{MXL}), whereas \ac{SWF} may fail to converge altogether due to ``ping-pong'' effects where users overcommit to a given spatial direction by diverting too much power to it all at once (thus ``congesting'' it) and then switch transmit directions in an effort to exploit the ensuing spatial gap (thus relinquishing the benefits from employing it in the first place).%
\footnote{For a detailed account of best-response cycles of this kind, see e.g. \cite{Blu93}.}


\begin{figure*}[t]
\centering
\subfigure
[Learning with a relative error level of 10\%.]
{\includegraphics[width=.49\textwidth]{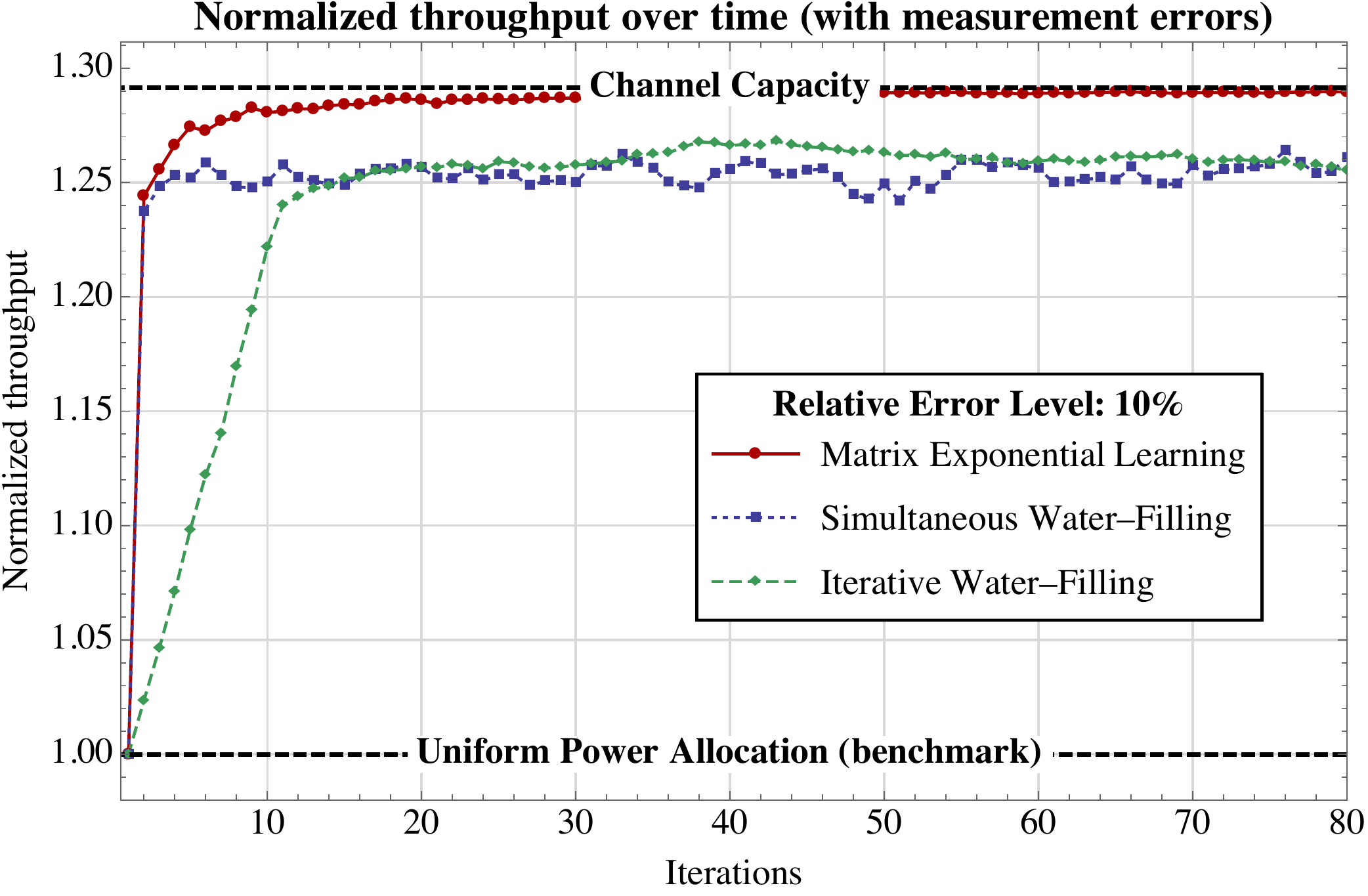}}
\hfill
\subfigure
[Learning with a relative error level of 50\%.]
{\label{fig:regret-rand}
\includegraphics[width=.49\textwidth]{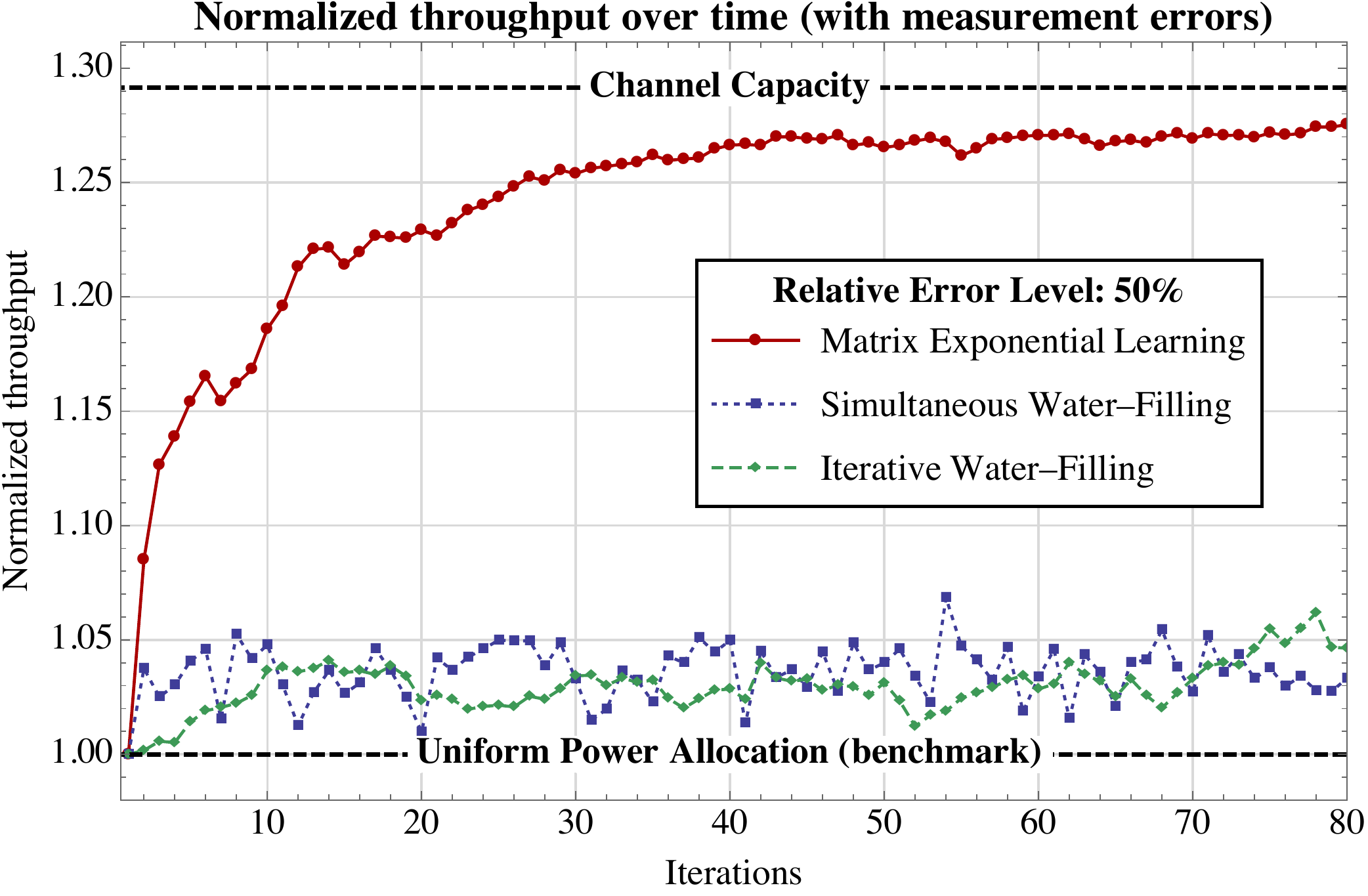}}
\caption{Performance of \acl{MXL} and \acf{WF} methods under imperfect \ac{CSI}.
In contrast to \ac{WF} methods, the \ac{MXL} algorithm attains the channel's sum capacity, even in the presence of very high measurement errors.}
\label{fig:comp-stoch}
\end{figure*}

In Figure \ref{fig:comp-stoch}, we investigate the robustness of \ac{MXL} under imperfect signal and channel measurements, and we compare it to iterative and simultaneous \ac{WF} methods under similar conditions.
Specifically, in Fig.~\ref{fig:comp-stoch}, we simulated a multi-user uplink \ac{MIMO} system consisting of a single receiver with $N = 24$ antennas and $K=20$ wireless transmitters with antenna and channel characteristics as in Fig.~\ref{fig:comp-static}.
To simulate noisy measurements, we used the estimation scheme of Section \ref{sec:estimate} where the received signal precision matrix $\bP = \ex\big[ \by \by^{\dag} \big]^{-1}$ of Eq.~\eqref{eq:precision} is estimated by sampling the aggregate signal $\by$ at the receiver and then feeding the unbiased sample mean \eqref{eq:prec-estimate} to the transmitters.
The measurement noise was controlled by the relative error level of the estimator (deviation/mean), so a relative error level of $\eta$ means that, on average, the estimated matrix lies within $\eta\%$ of its true value (in terms of the Frobenius matrix norm).
We then plotted the efficiency of \ac{MXL} over time for average error levels of $\eta = 10\%$ and $\eta = 50\%$, and we ran the iterative and simultaneous \ac{WF} algorithms with the same sample realizations for comparison.

As can be seen in Figure~\ref{fig:comp-stoch}, the performance of \acl{WF} methods remains acceptable at low error levels, allowing users to attain between 90\% and 95\% of the channel's sum capacity.
However, when the measurement error level gets higher, \acl{WF} (either iterative or simultaneous) offers no perceptible advantage over the users' initial signal covariance matrices (the benchmark case of uniform power allocation across antennas).
By contrast, as predicted by Theorem \ref{thm:conv}, the \ac{MXL} algorithm retains its convergence properties and converges to the system's sum capacity, even under very noisy measurements \textendash\ though, of course, the algorithm's convergence speed is negatively impacted when the measurement noise grows too high.

\begin{figure*}[t]
\centering
\subfigure
[Performance of \ac{MXL} with average user velocity $v=5~\mathrm{m/s}$.]
{\label{fig:tracking-5}
\includegraphics[width=.49\textwidth]{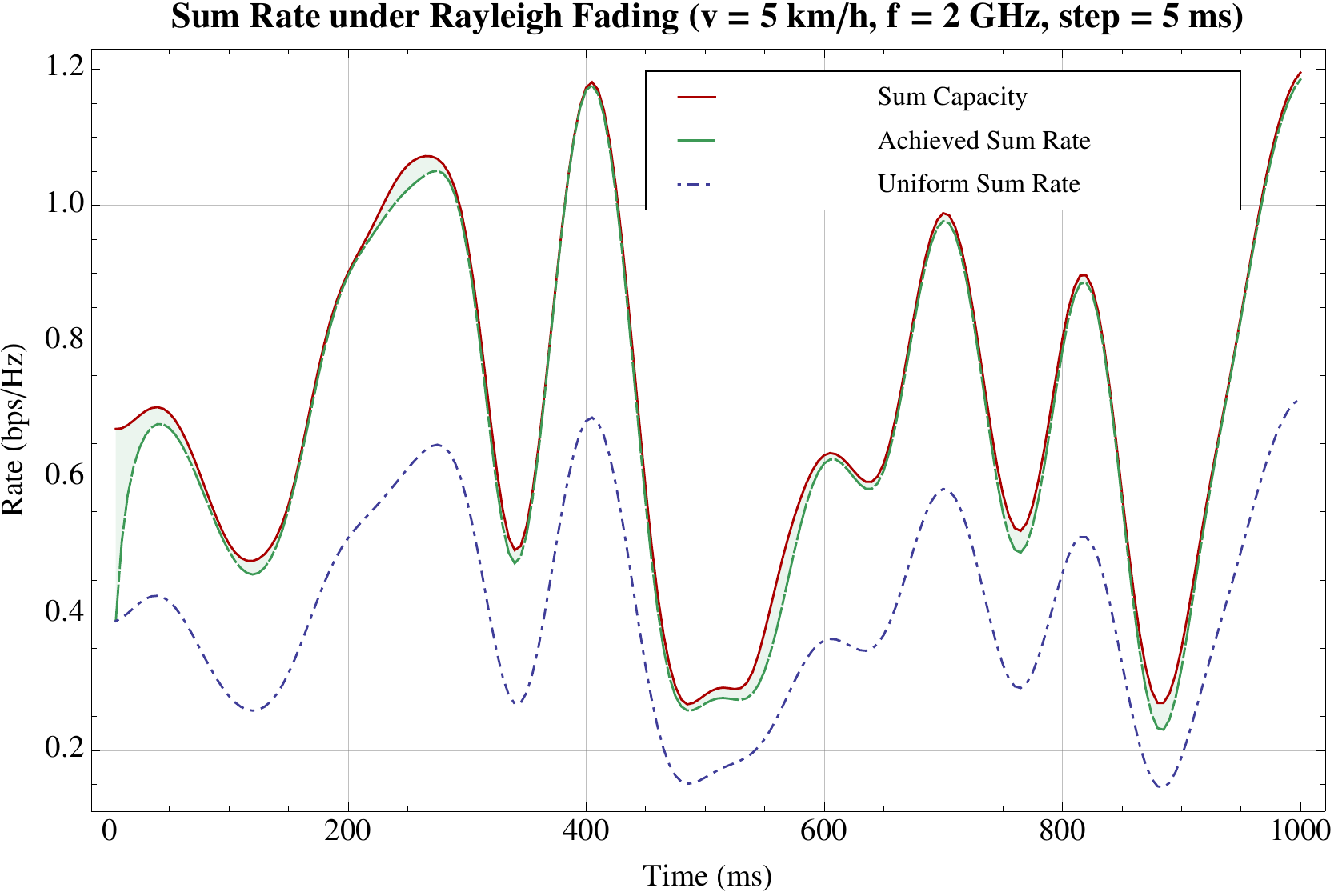}}
\hfill
\subfigure
[Performance of \ac{MXL} with average user velocity $v=15~\mathrm{m/s}$.]
{\label{fig:tracking-15}
\includegraphics[width=.49\textwidth]{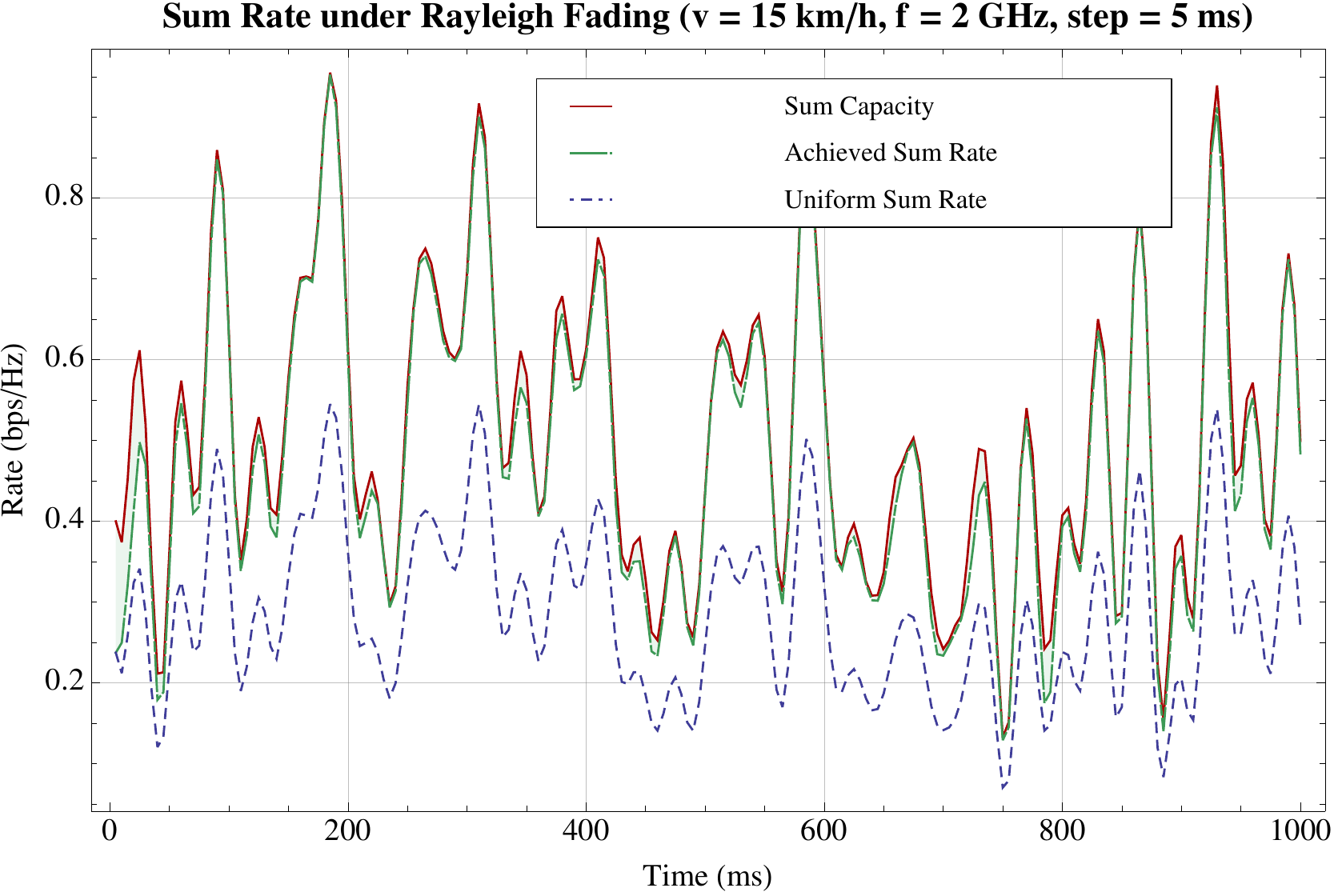}}
\vspace{-1ex}
\caption{Data rates achieved by \ac{MXL} in a dynamic environment with time-varying channels following the Jakes model for Rayleigh fading (model parameters indicated in the figure caption).
The dynamic transmit policy induced by the \ac{MXL} algorithm allows users to track the system's sum capacity remarkably well, even under rapidly changing channel conditions.}
\label{fig:tracking}
\end{figure*}

Finally, to account for time-varying channel conditions, we also plotted the performance of the proposed \ac{MXL} algorithm in non-static channels, following the well-known Jakes model for Rayleigh fading \cite{CCGH+07}.
Specifically, in Figure~\ref{fig:tracking}, we consider a \ac{MIMO} uplink system consisting of a receiver with $N=8$ antennas and $K = 10$ mobile users with $2$ transmit antennas, each transmitting at a central frequency of $f = 2\,\mathrm{GHz}$.
For the users' mobility model, we used the \ac{EPA} and \ac{EVA} models \cite{3GPP} with average user velocities $v = 5~\mathrm{m/s}$ (pedestrian movement) and $v = 15~\mathrm{m/s}$ (vehicular movement in traffic-heavy urban environments).
We then ran the \ac{MXL} algorithm with an update period of $\delta = 5\,\mathrm{ms}$ (one update per frame), and we plotted the algorithm's achieved sum rate $\rate_{n}$ at the $n$-th iteration of the algorithm versus
\begin{inparaenum}
[\itshape a\upshape)]
\item
the system's sum capacity $\rate_{n}^{\max}$ given the current realization of the channel matrices $\bH_{k}(t)$ at time $t = n\delta$;
and
\item
the users' sum rate under uniform power allocation over their antennas.
\end{inparaenum}
Thanks to its high convergence speed and its robustness, the \ac{MXL} algorithm tracks the system's sum capacity remarkably well, despite the channels' variability.
Moreover, the throughput difference between the learned transmit covariance profile and the uniform one shows that this tracking is not an artifact of the system's sum capacity falling within a narrow band of what could be attained by spreading power uniformly over antennas:
users actively track the system's optimum transmit profile as it evolves over time, even under rapidly varying channel conditions.


\section{Conclusions and Perspectives}
\label{sec:conclusions}

In this paper, we introduced a distributed signal covariance optimization algorithm to maximize the uplink sum capacity of multi-antenna users that transmit to a common multi-antenna receiver with only imperfect, possibly delayed and asynchronously updated \acl{CSI} at the users' disposal.
Under fairly mild hypotheses for the moments of the statistics of the estimation imperfections, we showed that the proposed \acf{MXL} algorithm converges rapidly, even for large numbers of users and/or antennas per user;
moreover, the probability that the algorithm deviates beyond a small error from the optimum after a fixed number of iterations is very small (and decays exponentially if the moments of the error process do not grow too fast).
In our view, these robustness properties of \ac{MXL} make it an attractive alternative to \acl{WF} methods that may fail to converge altogether in the presence of measurement noise.
This is confirmed by extensive numerical simulations which exhibit the fast convergence and robustness properties of \ac{MXL} in realistic channel conditions.

We focused on the \ac{MIMO} \acl{MAC} only for simplicity.
The proposed algorithm can be readily extended to a \ac{MIMO}\textendash\acs{OFDM} framework, different precoding schemes (such as MMSE or ZF-type precoders) or to account for other transmission features such as spectral mask constraints, pricing, etc.
The method can also be adapted to wide range channel models (such as the interference channel), where a game-theoretic approach as in \cite{SPB09-sp,SPB08-jsac} is more appropriate:
in this context, a natural question that arises is whether the algorithm converges to a Nash equilibrium, and whether this convergence is retained in the presence of noise.
In fact, thanks to the exponentiation step, our method can be adapted to even more general constrained matrix optimization problems as in \cite{EAS98,AEK08};
we intend to explore these directions in future work.

\appendix[Technical Proofs]
\label{app:proofs}

Our goal in this appendix will be to prove the convergence results presented in the rest of our paper.
To that end, we will first establish the convergence of a deterministic, ``mean-field'' dynamical system associated to the \ac{MXL} algorithm, and we will then show that the iterates of \ac{MXL} and its variants comprise a stochastic approximation thereof \cite{Ben99}.
Our robustness results will then follow by exploiting the powerful martingale concentration inequalities of \cite{dlP99}.

For notational clarity, in the rest of this appendix (and unless explicitly stated otherwise), we will treat the case of a single user with maximum transmit power $P=1$;
the general case is simply a matter of taking a direct sum over $k\in\play$ and rescaling by the corresponding maximum power $P_{k}$.

\subsection{The mean dynamics of exponential learning}
\label{app:proofs-dynamics}

We begin by considering the following continuous-time version of the basic recursion \eqref{eq:MXL}:
\begin{equation}
\label{eq:MXL-cont}
\tag{MXL-c}
\begin{aligned}
\dot \bY
	&= \bV,
	\\
\bQ
	&= \frac{\exp(\bY)}{\tr\left[\exp(\bY)\right]},
\end{aligned}
\end{equation}
The following proposition shows that \eqref{eq:MXL-cont} converges to the solution set of the rate maximization problem \eqref{eq:RM}:

\begin{theorem}
\label{thm:conv-cont}
Let $\bQ(t)$ be a solution orbit of \eqref{eq:MXL-cont}.
Then, $\bQ(t)$ converges to a minimizer of \eqref{eq:RM}.
\end{theorem}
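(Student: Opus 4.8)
The plan is to recognize \eqref{eq:MXL-cont} as a continuous-time dual-averaging (``mirror'') flow driven by the concave payoff $\rate$, with the negative von Neumann entropy $h(\bQ)=\tr(\bQ\log\bQ)$ serving as regularizing kernel on the spectrahedron $\strat=\{\bQ\mgeq0:\tr\bQ=1\}$ (as announced in the appendix, I treat a single user with $P=1$). The convex conjugate of $h$ is the log-partition function $h^{\ast}(\bY)=\log\tr[\exp(\bY)]$, and the second line of \eqref{eq:MXL-cont} is precisely the primal--dual map $\bQ=\nabla h^{\ast}(\bY)$. Since $\rate$ is smooth on $\strat$ and, by construction, $\bQ(t)$ stays in the compact set $\strat$, the gradient field $\bV=\nabla\rate(\bQ)$ is bounded along orbits, so \eqref{eq:MXL-cont} admits a unique solution for all $t\geq0$ and the statement is meaningful.

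First I would introduce the Fenchel coupling $\fench(\bp,\bY)=h(\bp)+h^{\ast}(\bY)-\product{\bp}{\bY}$, which is nonnegative by Young's inequality and, whenever $\bQ=\nabla h^{\ast}(\bY)$, coincides with the Bregman divergence $D_{h}(\bp,\bQ)$ of $h$, i.e.\ with the quantum relative entropy $\tr[\bp(\log\bp-\log\bQ)]$. Let $\eq$ be any solution of \eqref{eq:RM}. Using $\frac{d}{dt}h^{\ast}(\bY(t))=\product{\nabla h^{\ast}(\bY)}{\dot\bY}=\product{\bQ}{\bV}$ together with $\dot\bY=\bV$, differentiation along an orbit yields
\[
\frac{d}{dt}\fench(\eq,\bY(t))=\product{\bQ(t)-\eq}{\bV}\leq\rate(\bQ(t))-\rate(\eq),
\]
the inequality being the gradient inequality for the concave function $\rate$. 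Hence $\fench(\eq,\bY(\cdot))$ is nonincreasing, so it converges, and integrating gives $\int_{0}^{\infty}\big[\rate_{\max}-\rate(\bQ(t))\big]\,\dd t\leq\fench(\eq,\bY(0))<\infty$. A separate computation, $\frac{d}{dt}\rate(\bQ(t))=\product{\bV}{\dot\bQ}=\product{\bV}{\nabla^{2}h^{\ast}(\bY)[\bV]}\geq0$ (the Hessian of the convex function $h^{\ast}$ is positive semidefinite), shows that $\rate(\bQ(t))$ is nondecreasing; since the gap $\rate_{\max}-\rate(\bQ(t))$ is then nonnegative, nonincreasing, and has finite integral, it must tend to $0$, i.e.\ $\rate(\bQ(t))\to\rate_{\max}$.

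To upgrade convergence of the objective to convergence of the orbit, I would use the fact that $h$ is $1$-strongly convex with respect to the nuclear norm on $\strat$ \cite{KSST12}. This gives the two-sided ``reciprocity'' of the map $\bY\mapsto\nabla h^{\ast}(\bY)$: on one side $\fench(\bp,\bY)\geq\tfrac12\norm{\nabla h^{\ast}(\bY)-\bp}_{\ast}^{2}$, and on the other $\fench(\bp,\bY_{n})\to0$ whenever $\nabla h^{\ast}(\bY_{n})\to\bp$. By compactness of $\strat$, the orbit $\bQ(\cdot)$ has a cluster point $\eq$, which solves \eqref{eq:RM} by the previous paragraph. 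Applying the Lyapunov estimate of that paragraph to this particular $\eq$, $\fench(\eq,\bY(\cdot))$ converges; evaluating along a subsequence with $\bQ(t_{n})\to\eq$ and using reciprocity, the limit is $0$; and then the lower bound forces $\norm{\bQ(t)-\eq}_{\ast}\to0$, i.e.\ $\bQ(t)$ converges to the solution $\eq$.

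The step I expect to be the main obstacle is the converse reciprocity $\bQ_{n}\to\bp\Rightarrow\fench(\bp,\bY_{n})\to0$, because the entropic kernel $h$ is not differentiable on the relative boundary of $\strat$ (its gradient $\log\bQ+\bI$ blows up at rank-deficient matrices) and optimal covariance profiles are generically rank-deficient. One has to check that $\tr[\bp(\log\bp-\log\bQ_{n})]\to0$ as $\bQ_{n}\to\bp$: working in an eigenbasis of $\bp$, the coordinates lying in $\ker\bp$ contribute $0$ identically (with the usual convention), while on the range of $\bp$ the matrix logarithm is continuous, so the sum tends to $0$. This ``steepness'' (Legendre-type) property of the von Neumann entropy is exactly what closes the argument.
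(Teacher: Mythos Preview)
Your proposal is correct and follows essentially the same route as the paper: the Fenchel coupling $\fench(\eq,\bY(t))$ serves as a Lyapunov function for every maximizer $\eq$, and point convergence (as opposed to set convergence) is obtained by choosing $\eq$ to be an $\omega$-limit of $\bQ(\cdot)$ and invoking the reciprocity of the coupling to drive the monotone limit of $\fench(\eq,\bY(t))$ to zero. Your added observations---monotonicity of $t\mapsto\rate(\bQ(t))$ via the positive-semidefinite Hessian of $h^{\ast}$, and the explicit Legendre/boundary discussion for the reciprocity step---are embellishments rather than a different approach; the paper reaches the first conclusion directly from the Lyapunov inequality and subsumes the second under Lemma~\ref{lem:Fenchel}.
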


A key ingredient in our proof will be the (negative) von Neumann quantum entropy:
\begin{equation}
\label{eq:entropy}
h(\bQ)
	= \tr\left[\bQ \log \bQ\right],
	\quad
	\bQ\in\spectron,
\end{equation}
and its convex conjugate (Legendre transform):%
\footnote{The convexity of $h(\bQ)$ is well known \cite{Ved02}.
To derive the expression \eqref{eq:conjugate} for its conjugate, simply note that $h(\bQ) = \sum_{j} q_{j} \log q_{j}$ where $q_{j}$ are the eigenvalues of $\bQ$;
Eq.~\eqref{eq:conjugate} then follows from the expression $h^{\ast}(y) = \log\sum_{j} \exp(y_{j})$ for the Legendre transform of the (negative) Gibbs entropy.}
\begin{equation}
\label{eq:conjugate}
h^{\ast}(\bY)
	\equiv \max\nolimits_{\bQ\in\spectron} \left\{ \tr[\bY\bQ] - h(\bQ) \right\}
	= \log \tr\left[\exp(\bY)\right],
\end{equation}
where $\spectron$ denotes the (compact) spectrahedron:
\begin{equation}
\label{eq:spectron}
\spectron
	= \left\{\bQ\mgeq0: \tr(\bQ) = 1 \right\}.
\end{equation}
It will also be convenient to introduce the following ``primal-dual'' coupling between $\bQ$ and $\bY$:
\begin{equation}
\label{eq:Fenchel}
\fench(\bQ,\bY)
	= h(\bQ) + h^{\ast}(\bY) - \tr\left[\bQ\bY\right].
\end{equation}
Eq.~\eqref{eq:Fenchel} gathers all the terms of Fenchel's inequality \cite{Roc70}, so, following \cite{MS14}, we will refer to it as the \emph{Fenchel coupling} between $\bQ$ and $\bY$.
Below, we present some key properties of $\fench$:

\begin{lemma}
\label{lem:Fenchel}
With notation as above, we have $\fench(\bQ,\bY) \geq 0$ with equality iff $\bQ = \exp(\bY)/\tr[\exp(\bY)]$.
Moreover:
\begin{equation}
\label{eq:dh*}
\nabla_{\bY} h^{\ast}(\bY)
	= \frac{\exp(\bY)}{\tr\left[\exp(\bY)\right]},
\end{equation}
and
\begin{equation}
\label{eq:dF}
\nabla_{\bY} \fench(\bQ,\bY)
	= \frac{\exp(\bY)}{\tr[\exp(\bY)]} - \bQ.
\end{equation}
\end{lemma}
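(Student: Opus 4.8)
The plan is to derive all three claims from the single structural fact recorded in \eqref{eq:conjugate}, namely that $h^{\ast}$ is the Legendre transform of the (strictly convex) von Neumann entropy $h$. First I would settle the inequality $\fench(\bQ,\bY)\geq 0$: by the very definition of $h^{\ast}(\bY)$ as a supremum, $h^{\ast}(\bY)\geq \tr[\bY\bQ]-h(\bQ)$ for every $\bQ\in\spectron$, which rearranges to $\fench(\bQ,\bY)\geq 0$ (this is just Fenchel's inequality, which is why \eqref{eq:Fenchel} was christened the Fenchel coupling). For the equality case I would exhibit the maximizer explicitly. Set $\bQ_{\bY}=\exp(\bY)/\tr[\exp(\bY)]$; then $\log\bQ_{\bY}=\bY-h^{\ast}(\bY)\,\bI$, so using $\tr\bQ_{\bY}=1$ one gets $h(\bQ_{\bY})=\tr[\bQ_{\bY}\log\bQ_{\bY}]=\tr[\bQ_{\bY}\bY]-h^{\ast}(\bY)$, and substituting into \eqref{eq:Fenchel} yields $\fench(\bQ_{\bY},\bY)=0$. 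Hence $\bQ_{\bY}$ attains the supremum in \eqref{eq:conjugate}. Since $x\mapsto x\log x$ is strictly convex, $h$ is strictly convex on the spectrahedron $\spectron$, so the concave program $\max_{\bQ\in\spectron}\{\tr[\bY\bQ]-h(\bQ)\}$ has a \emph{unique} solution; therefore $\fench(\bQ,\bY)=0$ forces $\bQ=\bQ_{\bY}$, which is exactly the stated equivalence.

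Next, for \eqref{eq:dh*} I would verify the gradient of the log-partition function $h^{\ast}(\bY)=\log\tr[\exp(\bY)]$ by differentiating along an arbitrary Hermitian direction $\bH$. Duhamel's formula gives $\tfrac{d}{dt}\exp(\bY+t\bH)\big|_{t=0}=\int_{0}^{1}e^{s\bY}\,\bH\,e^{(1-s)\bY}\dd s$; taking the trace and invoking cyclicity collapses the integrand to $\tr[\bH\,e^{\bY}]$, independently of $s$, so $\tfrac{d}{dt}\tr[\exp(\bY+t\bH)]\big|_{t=0}=\tr[\bH\,e^{\bY}]$, and by the chain rule $\product{\nabla_{\bY}h^{\ast}(\bY)}{\bH}=\tr[\bH\,e^{\bY}]/\tr[e^{\bY}]$ for every Hermitian $\bH$. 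Because the Hilbert--Schmidt pairing is nondegenerate on the space of Hermitian matrices, this identifies $\nabla_{\bY}h^{\ast}(\bY)=\exp(\bY)/\tr[\exp(\bY)]$, as claimed. (Alternatively, one could simply invoke the standard fact that the gradient of the conjugate of a strictly convex, essentially smooth function is the maximizing argument, which by the previous paragraph is again $\bQ_{\bY}$.) Finally \eqref{eq:dF} is immediate from \eqref{eq:Fenchel}: the term $h(\bQ)$ does not depend on $\bY$ and $\tr[\bQ\bY]$ is linear in $\bY$ with $\bY$-gradient $\bQ$, so $\nabla_{\bY}\fench(\bQ,\bY)=\nabla_{\bY}h^{\ast}(\bY)-\bQ=\exp(\bY)/\tr[\exp(\bY)]-\bQ$.

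The only point requiring genuine care — and the step I would treat most carefully — is the uniqueness of the maximizer used in the equality case: one must invoke strict convexity of $h$ on $\spectron$ (so that $\bQ\mapsto\tr[\bY\bQ]-h(\bQ)$ is strictly concave and its maximizer is unique), and note that the maximizer $\bQ_{\bY}$ lies in the relative interior of $\spectron$ since $\exp(\bY)\succ 0$. Everything else — the Duhamel differentiation and the linearity argument for \eqref{eq:dF} — is routine matrix calculus resting on the cyclic property of the trace.
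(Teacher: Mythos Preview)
Your proof is correct and follows essentially the same approach as the paper's: the paper computes $\nabla_{\bY} h^{\ast}(\bY)$ by citing ``standard matrix analysis results'' (your Duhamel argument makes this explicit), observes that \eqref{eq:dF} follows trivially, and defers the Fenchel inequality and its equality case to the general theory of convex conjugation in Rockafellar. Your version simply unpacks these citations into a self-contained argument, with the strict convexity of $h$ doing the work that the paper leaves implicit in the reference to \cite{Roc70}.
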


\begin{IEEEproof}
By standard matrix analysis results \cite{Dat05}, we have:
\begin{flalign}
\nabla_{\bY} h^{\ast}(\bY)
	= \frac{1}{\tr[\exp(\bY)]} \nabla_{\bY} \tr[\exp(\bY)]
	= \frac{\exp(\bY)}{\tr[\exp(\bY)]},
\end{flalign}
and \eqref{eq:dF} follows trivially.
The first part of our claim is then a consequence of the general theory of convex conjugation \textendash\ see e.g. \cite[Chap.~26]{Roc70}.
\end{IEEEproof}

\smallskip

\begin{IEEEproof}[Proof of Theorem \ref{thm:conv-cont}]
Our proof relies on the fact that the Fenchel coupling $H(t) = \fench(\eq,Y(t))$ is a Lyapunov function for \eqref{eq:MXL-cont} for every maximizer $\eq$ of \eqref{eq:RM}.
Indeed, the definition of the Fenchel coupling and Lemma \ref{lem:Fenchel} yield
\begin{equation}
\label{eq:dH}
\dot H
	= \tr\big[ \nabla_{\bY} h^{\ast}(\bY) \cdot \dot\bY \big]
	- \tr\big[ \eq \dot\bY \big]
	= \tr\big[ (\bQ - \eq) \cdot \nabla_{\bQ}\rate \big]
	\leq 0,
\end{equation}
where the inequality in the last step follows from the concavity of $\rate$ and the fact that $\eq$ is a maximizer of $\rate$.
Moreover, equality in \eqref{eq:dH} holds if and only if $\bQ$ is also a maximizer of $\rate$, so $H$ is a Lyapunov function for \eqref{eq:MXL-cont} with respect to $\argmax \rate$.

The above reasoning shows that \eqref{eq:MXL-cont} converges to $\argmax\rate$, but since $\rate$ is not necessarily strictly concave, this does not imply that every trajectory of \eqref{eq:MXL-cont} converges to a specific point in $\argmax\rate$.
To show that this is indeed the case, let $\bQ(t)$ be an orbit of \eqref{eq:MXL-cont} and let $\eq$ be an $\omega$-limit of $\bQ(t)$, i.e. $\bQ(t_{n})\to\eq$ for some increasing sequence $t_{n}\to\infty$.
By Lemma \ref{lem:Fenchel}, this implies that $\fench(\eq,\bY(t_{n}))\to 0$, so, since $\fench(\eq,\bY(t))$ is nonincreasing, we also get $\lim_{t\to\infty} \fench(\eq,\bY(t)) = 0$.
We conclude that $\bQ(t) \to \eq$ (again by Lemma \ref{lem:Fenchel}) and our proof is complete.
\end{IEEEproof}

\subsection{Stochastic approximation and convergence}
\label{app:proofs-basic}

We now proceed to show that the iterates of the \ac{MXL} are asymptotically close to solution segments of \eqref{eq:MXL-cont} of arbitrary length \textendash\ more precisely, that they comprise an \ac{APT} of \eqref{eq:MXL-cont} in the sense of \cite{Ben99}.

\begin{proposition}
Assume that \eqref{eq:MXL} is run with a nonincreasing step-size sequence $\step_{n}$ such that $\sum_{n} \step_{n}^{2} < \sum_{n} \step_{n} = +\infty$ and noisy measurements $\hat\bV_{k}$ satisfying \hypref{hyp:zeromean} and \hypref{hyp:MSE}.
Then, the iterates $\bQ(n)$ of \eqref{eq:MXL} form an \acl{APT} of \eqref{eq:MXL-cont}.
\end{proposition}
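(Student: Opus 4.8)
The plan is to lift \eqref{eq:MXL} to the ``dual'' space of scoring matrices $\bY$, where the recursion becomes a plain Robbins--Monro stochastic approximation scheme, and then invoke the general \ac{APT} criterion of \cite[Prop.~4.1, 4.2]{Ben99}. Concretely, writing $G(\bY) = \exp(\bY)/\tr[\exp(\bY)] = \nabla_{\bY} h^{\ast}(\bY)$ (cf.\ \eqref{eq:dh*}) and $\bZ(n) = \hat\bV(n) - \bV(\bQ(n))$, the update \eqref{eq:MXL} reads $\bY(n+1) = \bY(n) + \step_{n}\big[ g(\bY(n)) + \bZ(n) \big]$ with $g = \bV\circ G$ and $\bQ(n) = G(\bY(n))$. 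The first observation is that \eqref{eq:MXL-cont} is precisely the image under $G$ of the autonomous ODE $\dot\bY = g(\bY)$: if $\bY(t)$ solves the latter, then $\bQ(t) = G(\bY(t))$ solves \eqref{eq:MXL-cont}. Hence it is enough to show that the (piecewise affine) interpolate of $\{\bY(n)\}$ is an \ac{APT} of $\dot\bY = g(\bY)$ and then transport this property through $G$.

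Two ingredients are needed. First, regularity of the mean field: since $G$ takes values in the compact spectrahedron $\spectron$, and $\bV(\bQ) = \bH^{\dag}(\bI + \bH\bQ\bH^{\dag})^{-1}\bH$ (writing things for a single user, following the appendix convention) is smooth on a neighbourhood of $\spectron$ — the resolvent is uniformly bounded because $\bI + \bH\bQ\bH^{\dag} \mgeq \bI$ — the map $\bV$ is bounded and Lipschitz on $\spectron$; together with the fact that $G = \nabla h^{\ast}$ is globally Lipschitz on the space of Hermitian matrices (a uniform curvature bound on the log-partition function $h^{\ast}$ of \eqref{eq:conjugate}, the matrix counterpart of the fact that the soft-max map is Lipschitz), this makes $g$ bounded and globally Lipschitz, so $\dot\bY = g(\bY)$ generates a well-defined Lipschitz global semiflow. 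Second, negligibility of the noise: by \hypref{hyp:zeromean} the $\bZ(n)$ form a martingale-difference sequence relative to the natural filtration of the process, and by \hypref{hyp:MSE} together with $\sum_{n}\step_{n}^{2}<\infty$ the martingale $M_{n} = \sum_{j=1}^{n}\step_{j}\bZ(j)$ satisfies $\ex[\smallnorm{M_{n}}^{2}] = \sum_{j=1}^{n}\step_{j}^{2}\,\ex[\smallnorm{\bZ(j)}^{2}] \leq \noisedev^{2}\sum_{j}\step_{j}^{2} < \infty$, so it is $L^{2}$-bounded and converges almost surely by Doob's theorem. The a.s.\ convergence of $M_{n}$ is, by the Cauchy criterion, equivalent to: for every $T>0$, $\sup\{\,\smallnorm{\sum_{j=n}^{m}\step_{j}\bZ(j)} : m\geq n,\ \sum_{j=n}^{m-1}\step_{j}\leq T\,\}\to 0$ almost surely — which is exactly the asymptotic noise hypothesis used in \cite{Ben99}.

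Assembling the pieces: $\step_{n}\to 0$ (as $\sum_{n}\step_{n}^{2}<\infty$), $\sum_{n}\step_{n}=\infty$ by assumption, the mean field $g$ is Lipschitz, and the weighted noise vanishes over compact time windows; hence \cite[Prop.~4.1, 4.2]{Ben99} gives that the interpolated process of $\{\bY(n)\}$ is almost surely an \ac{APT} of $\dot\bY = g(\bY)$. Applying the globally Lipschitz (hence uniformly continuous) map $G$, which carries the $\bY$-semiflow to that of \eqref{eq:MXL-cont}, then shows that the interpolate of $\bQ(n) = G(\bY(n))$ is almost surely an \ac{APT} of \eqref{eq:MXL-cont}, as claimed. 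The step I expect to be the main obstacle is the regularity claim, and within it the \emph{global} Lipschitz bound on $G = \nabla h^{\ast}$: one must control $\smallnorm{\nabla^{2}h^{\ast}(\bY)}$ uniformly over all Hermitian $\bY$, not merely locally, because the scoring matrices $\bY(n)$ are a priori unbounded (a persistently favoured eigendirection drives the corresponding eigenvalue of $\bY$ to $+\infty$), so a purely local estimate would not be enough to feed into the \ac{APT} machinery.
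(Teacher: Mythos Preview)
Your proposal is correct and follows essentially the same route as the paper: rewrite \eqref{eq:MXL} in the $\bY$-variables as a Robbins--Monro recursion, observe that $g = \bV\circ G$ is bounded and globally Lipschitz, and invoke \cite[Prop.~4.1, 4.2]{Ben99}. The step you flag as the obstacle --- the global Lipschitz bound on $G = \nabla h^{\ast}$ --- is exactly what the paper isolates: it is obtained not by bounding $\nabla^{2}h^{\ast}$ directly, but by Fenchel duality from the $1$-strong convexity of the von Neumann entropy with respect to the nuclear norm (see \cite{KSST12,Nes09}), which immediately yields $1$-strong smoothness of $h^{\ast}$ and hence a global Lipschitz constant for $G$.
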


\begin{IEEEproof}
Simply note that the recursion \eqref{eq:MXL} can be written in the form:
\begin{equation}
\label{eq:Y0}
\bY(n+1)
	= \bY(n)
	+ \step_{n} \left[ \bV(\bQ(n)) + \bZ(n) \right].
\end{equation}
Since the map $\bY\mapsto\bQ$ is Lipschitz%
\footnote{This follows from the fact that the von Neumann entropy \eqref{eq:entropy} is strongly convex with respect to the nuclear norm \cite{KSST12,Nes09}.}
and the rate function $\rate(\bQ)$ is smooth over the compact spectrahedron $\spectron$, it follows that the map $\bY\mapsto\bV(\bQ(\bY))$ is Lipschitz and bounded.
Our claim then follows from Propositions 4.2 and 4.1 in \cite{Ben99}.
\end{IEEEproof}

With all this said and done, we are finally in a position to prove Theorem \ref{thm:conv}:

\begin{IEEEproof}[Proof of Theorem \ref{thm:conv}]
Let $\eqset = \argmax_{\bQ\in\spectron} \rate(\bQ)$ denote the solution set of \eqref{eq:RM} and assume ad absurdum that $\bQ(n)$ remains a bounded distance away from $\eqset$.
Furthermore, fix some $\eq\in\eqset$ and let $D_{n} = \fench(\bQ,\bY(n))$;
a Taylor expansion of $\fench$ then yields:
\begin{flalign}
\label{eq:Dn0}
D_{n+1}
	&= \fench(\eq,\bY(n+1))
	= \fench(\eq,\bY(n) + \step_{n} \hat\bV(n))
	\notag\\
	&\leq D_{n}
	+ \step_{n} \tr[(\bQ(n) - \eq) \!\cdot\! \bV(\bQ(n))]
	+ \step_{n} \xi_{n}
	+ \tfrac{1}{2} \step_{n}^{2} \smallnorm{\hat\bV(n)}^{2},
\end{flalign}
where $\xi_{n} = \tr[\bZ(n)\cdot (\eq - \bQ(n))]$ and we have used the fact that the convex conjugate $h^{\ast}$ of the von Neumann entropy is $1$-strongly smooth \cite{KSST12}.

Our original assumption that $\bQ(n)$ remains a bounded distance away from $\eqset$ means that $D_{n}$ is bounded away from zero;
moreover, with $\rate$ concave and smooth, we will also have $\tr[\bV(n) \cdot (\bQ(n) - \eq)] \leq -m$ for some $m>0$.
Thus, telescoping \eqref{eq:Dn0} yields:
\begin{equation}
\label{eq:Dn1}
D_{n+1}
	\leq D_{0}
	- t_{n} \left( m - \insum_{j=1}^{n} w_{j,n}\,\xi_{j} \right)
	+ \frac{1}{2} \insum_{j=1}^{n} \step_{j}^{2} \norm{\hat \bV(j)}^{2},
\end{equation}
where $t_{n} = \sum_{j=1}^{n} \step_{j}$ and $w_{j,n} = \step_{j}/t_{n}$.
By the strong law of large numbers for martingale differences \cite[Theorem 2.18]{HH80}, we have $n^{-1} \sum_{j=1}^{n} \xi_{j} \to0$ (a.s.);
hence, with $\step_{n+1}/\step_{n}\leq1$, Hardy's Tauberian summability criterion \cite[p.~58]{Har49} applied to the weight sequence $w_{j,n} = \step_{j}/t_{n}$ yields $\sum_{j=1}^{n} w_{j,n}\, \xi_{j} \to 0$ (a.s.).
Finally, since $\step_{n}$ is square-summable and $\step_{n}\bZ(n)$ is a martingale difference with finite variance, it follows that $\sum_{n=1}^{\infty} \step_{n}^{2} \smallnorm{\hat\bV(n)}^{2} < \infty$ (a.s.) by Theorem 6 in \cite{Cho68}.

Combining all of the above, we see that the RHS of \eqref{eq:Dn1} tends to $-\infty$ (a.s.);
this contradicts the fact that $D_{n}\leq0$, so we conclude that $\bQ(n)$ visits a compact neighborhood of $\eqset$ infinitely often.
Since $\eqset$ attracts any initial condition $\bY(0)$ under the continuous-time dynamics \eqref{eq:MXL}, Theorem 6.10 in \cite{Ben99} shows that $\bQ(n)$ converges to $\eqset$, as claimed.

For the bound \eqref{eq:conv-mean}, note that \eqref{eq:Dn0} can be rewritten as
\begin{equation}
\label{eq:mean1}
\step_{n} \tr[(\eq - \bQ(n))\cdot \bV(\bQ(n))]
	\leq D_{n} - D_{n+1}
	+ \step_{n} \xi_{n}
	+ \frac{1}{2} \step_{n}^{2} \norm{\hat\bV(n)}^{2},
\end{equation}
so, recalling that $\rate$ is concave and $\bV = \nabla_{\bQ}\rate$, we get:
\begin{flalign}
\label{eq:mean2}
\step_{n} \left[ \rate_{\max} - \rate_{n} \right]
	&\leq \step_{n} \tr[(\eq - \bQ(n))\cdot \bV(\bQ(n))]
	\notag\\
	&\leq D_{n} - D_{n+1}
	+ \step_{n} \xi_{n}
	+ \frac{1}{2} \step_{n}^{2} \norm{\hat\bV(n)}^{2}.
\end{flalign}
Thus, taking expectations on both sides and telescoping, we obtain:
\begin{equation}
\label{eq:mean3}
\insum_{j=1}^{n} \step_{j} \left[ \rate_{\max} - \ex[\rate_{j}] \right]
	\leq D_{0}
	+ \frac{1}{2} V^{2} \insum_{j=1}^{n} \step_{j}^{2},
\end{equation}
where we have used the fact that $\ex[\xi_{n}] = 0$ and the finite mean square hypothesis $\ex\big[\smallnorm{\hat\bV(n)}^{2}\big] \leq V^{2}$.
From \eqref{eq:Fenchel}, we have $D_{0} = \fench(\eq,0) \leq \max_{\bQ,\bQ'}\{ h(\bQ) - h(\bQ') \} = \log M$, so \eqref{eq:conv-mean} follows by rearranging \eqref{eq:mean3} and solving for $\ex\big[\bar\rate_{n}\big] = t_{n}^{-1} \sum_{j=1}^{n} \step_{j} \ex\big[\rate_{j}\big]$.

Moreover, for the large deviations bound \eqref{eq:conv-prob}, Eq.~\eqref{eq:mean2} yields $\rate_{\max} - \bar\rate_{n} \leq \eps_{n} + t_{n}^{-1} \insum_{j=1}^{n} \step_{j} \xi_{j}$, so 
\begin{equation}
\label{eq:prob1}
\prob\left( \rate_{\max} - \bar\rate_{n} \geq \eps_{n} + z \right)
	\leq \prob\left( \txs \insum_{j=1}^{n} \abs{\step_{j} \xi_{j}} \geq t_{n} z \right),
\end{equation}
with $\eps_{n} = t_{n}^{-1} \left( \log M + \frac{1}{2} V^{2} \sum_{j=1}^{n} \step_{j}^{2} \right)$ defined as in \eqref{eq:conv-bound}.
Since $\step_{j}\xi_{j}$ is a martingale difference with finite conditional variance $\var(\step_{j}\xi_{j}) \leq A \step_{j}^{2} \noisedev^{2}$ for some $A>0$, Chebyshev's inequality yields:
\begin{equation}
\label{eq:prob2}
\prob\left( \txs \insum_{j=1}^{n} \abs{\step_{j} \xi_{j}} \geq t_{n} z \right)
	\leq \frac{1}{t_{n}^{2} z^{2}} \insum_{j=1}^{n} \var(\step_{j} \xi_{j})
	\leq \frac{A\noisedev^{2}}{t_{n}^{2} z^{2}} \insum_{j=1}^{n} \step_{j}^{2}
	= \bigoh\left( \noisedev^{2} z^{-2} t_{n}^{-2} \insum_{j=1}^{n} \step_{j}^{2} \right).
\end{equation}
Finally, if the conditional distribution of $\hat\bV(n)$ given $\bQ(n-1)$ is symmetric around $\bV(\bQ(n-1))$, the conditional distribution of $\xi_{n}$ will be symmetric around $0$, so the bound \eqref{eq:conv-prob-exp} follows from the exponential concentration inequality (6.1) of \cite{dlP99}.
Otherwise, under the modified hypothesis \eqref{eq:moments}, the exponential bound \eqref{eq:conv-prob-exp} follows from Theorem 1.2A in \cite{dlP99}.
\end{IEEEproof}

\subsection{Variants of \ac{MXL}}
\label{app:proofs-variants}

In this section, we prove the convergence of the variant exponential learning schemes \ac{AMXL} and \ac{EXL} (Algorithms \ref{alg:AMXL} and \ref{alg:EXL} respectively).

\begin{IEEEproof}[Proof of Theorem \ref{thm:conv-AMXL}]
We will show that the recursion \eqref{eq:AMXL} is an asynchronous stochastic approximation of \eqref{eq:MXL-cont} in the sense of \cite[Chap.~7]{Bor08}.
Indeed, by Theorems 2 and 3 in \cite{Bor08}, the recursion \eqref{eq:AMXL} may be viewed as a stochastic approximation of the rate-adjusted dynamics
\begin{equation}
\label{eq:AMXL-cont}
\begin{aligned}
\dot \bY_{k}
	&= \eta_{k} \bV_{k}
	\\
\bQ_{k}
	&= \frac{\exp(\bY_{k})}{\tr[\exp(\bY_{k})]}
\end{aligned}
\end{equation}
where we have momentarily reinstated the user index $k$ and $\eta_{k} = \lim_{n\to\infty} n_{k}/n > 0$ denotes the update rate of user $k$ (the existence and positivity of this limit follows from the ergodicity of the update process $\play_{n}$).
This multiplicative factor does not alter the rest points and \ac{ICT} sets \cite{Ben99} of the dynamics \eqref{eq:MXL-cont}, so \eqref{eq:AMXL-cont} converges to $\argmax\rate$ from any initial condition and the proof of Theorem \ref{thm:conv} carries through essentially verbatim.
\end{IEEEproof}

To prove Theorem \ref{thm:conv-EXL}, we first need to derive the eigen-dynamics \eqref{eq:MXL-eig} induced by \eqref{eq:MXL-cont}:

\begin{proposition}
\label{prop:eigen-dynamics}
Let $\bQ(t)$ be a solution orbit of \eqref{eq:MXL-cont} and let $\{q_{\alpha}(t), \bu_{\alpha}(t)\}$ be a smooth eigen-decomposition of $\bQ(t)$.
Then, $\{q_{\alpha}(t),\bu_{\alpha}(t)\}$ is a solution of the eigen-dynamics \eqref{eq:MXL-eig}.
\end{proposition}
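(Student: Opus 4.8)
The plan is to exploit the fact that along any solution of \eqref{eq:MXL-cont} the Hermitian matrices $\bY(t)$ and $\bQ(t) = \exp(\bY(t))/\tr[\exp(\bY(t))]$ commute, and hence share a common smooth orthonormal eigenbasis $\{\bu_{\alpha}(t)\}$. Writing $y_{\alpha}(t)$ for the eigenvalue of $\bY(t)$ attached to $\bu_{\alpha}(t)$, we have
\begin{equation}
\label{eq:eigen-link}
q_{\alpha} = \frac{e^{y_{\alpha}}}{\sum_{\beta} e^{y_{\beta}}},
\qquad\text{so that}\qquad
\log q_{\alpha} - \log q_{\beta} = y_{\alpha} - y_{\beta}.
\end{equation}
The second identity is the bridge between the ordinary spectral gaps $y_{\alpha} - y_{\beta}$ that arise in first-order eigen-perturbation and the logarithmic gaps $\log q_{\alpha} - \log q_{\beta}$ that appear in \eqref{eq:MXL-eigenvectors}.

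First I would differentiate the eigenrelation $\bY\bu_{\alpha} = y_{\alpha}\bu_{\alpha}$ along the flow, using $\dot\bY = \bV$. Taking the inner product of $\bV\bu_{\alpha} + \bY\dot\bu_{\alpha} = \dot y_{\alpha}\bu_{\alpha} + y_{\alpha}\dot\bu_{\alpha}$ with $\bu_{\alpha}$ gives $\dot y_{\alpha} = \bu_{\alpha}^{\dag}\bV\bu_{\alpha} = V_{\alpha\alpha}$, while the inner product with $\bu_{\beta}$ for $\beta\neq\alpha$, together with $\bu_{\beta}^{\dag}\bY = y_{\beta}\bu_{\beta}^{\dag}$, gives $\bu_{\beta}^{\dag}\dot\bu_{\alpha} = V_{\beta\alpha}/(y_{\alpha} - y_{\beta})$. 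Since $\{\bu_{\beta}\}$ is an orthonormal basis and $\|\bu_{\alpha}\| = 1$ forces $\mathrm{Re}(\bu_{\alpha}^{\dag}\dot\bu_{\alpha}) = 0$, after fixing the (gauge-irrelevant) phase convention $\bu_{\alpha}^{\dag}\dot\bu_{\alpha} = 0$ we expand $\dot\bu_{\alpha} = \sum_{\beta\neq\alpha}(\bu_{\beta}^{\dag}\dot\bu_{\alpha})\bu_{\beta}$ and substitute $y_{\alpha} - y_{\beta} = \log q_{\alpha} - \log q_{\beta}$ from \eqref{eq:eigen-link} to recover \eqref{eq:MXL-eigenvectors}. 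The phase convention is harmless because $\bQ = \sum_{\alpha} q_{\alpha}\bu_{\alpha}\bu_{\alpha}^{\dag}$ is invariant under $\bu_{\alpha}\mapsto e^{\iq\theta_{\alpha}(t)}\bu_{\alpha}$.

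Next I would derive the eigenvalue equation by differentiating $q_{\alpha} = e^{y_{\alpha}}/Z$ with $Z = \sum_{\beta}e^{y_{\beta}}$: from $\dot Z = \sum_{\beta}e^{y_{\beta}}\dot y_{\beta} = Z\sum_{\beta}q_{\beta}V_{\beta\beta}$, the quotient rule yields $\dot q_{\alpha} = q_{\alpha}\dot y_{\alpha} - q_{\alpha}\dot Z/Z = q_{\alpha}\bigl(V_{\alpha\alpha} - \sum_{\beta}q_{\beta}V_{\beta\beta}\bigr)$, which is \eqref{eq:MXL-eigenvalues} for $P = 1$; the general normalization $\tr\bQ_{k} = P_{k}$ is restored by the rescaling convention adopted at the start of the appendix, and reinstating the user index $k$ throughout gives the full system \eqref{eq:MXL-eig}.

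The main obstacle is eigenvalue crossings: the perturbation formula for $\dot\bu_{\alpha}$ — and the right-hand side of \eqref{eq:MXL-eigenvectors} itself — only makes sense where the $q_{\alpha}$ (equivalently the $y_{\alpha}$) are pairwise distinct, and a globally smooth eigen-decomposition need not persist through a degenerate instant. I would therefore state the conclusion on the open set of times where $\bQ(t)$ has simple spectrum, on which smoothness of $\{q_{\alpha},\bu_{\alpha}\}$ is automatic; this is all that is needed for the eigen-based variant in Algorithm \ref{alg:EXL}, which is initialized generically. Handling degenerate strata would require working within the relevant eigenspace rather than with individual eigenvectors, but that refinement is not pursued here.
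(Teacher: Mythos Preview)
Your argument is correct and takes a genuinely different route from the paper. The paper computes $\dot\bQ$ directly via the Fr\'echet derivative of the matrix exponential,
\[
\dot\bQ = \int_{0}^{1} \bQ^{1-s}\bV\,\bQ^{s}\dd s - \bQ\,\tr[\bV\bQ],
\]
projects this onto the eigenbasis to obtain $\bu_{\alpha}^{\dag}\dot\bQ\bu_{\beta}$, and then carries out the $s$-integration, which is precisely where the logarithmic gap $(q_{\alpha}-q_{\beta})/(\log q_{\alpha}-\log q_{\beta})$ emerges. You instead exploit the commutativity of $\bY$ and $\bQ$ to work with the linear eigenproblem $\bY\bu_{\alpha}=y_{\alpha}\bu_{\alpha}$, for which standard first-order perturbation yields $\dot y_{\alpha}=V_{\alpha\alpha}$ and $\bu_{\beta}^{\dag}\dot\bu_{\alpha}=V_{\beta\alpha}/(y_{\alpha}-y_{\beta})$ with no integral in sight; the logarithmic gap then enters through the purely algebraic identity $y_{\alpha}-y_{\beta}=\log q_{\alpha}-\log q_{\beta}$, and the softmax relation $q_{\alpha}=e^{y_{\alpha}}/Z$ gives the eigenvalue dynamics by the quotient rule. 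Your approach is more elementary and makes transparent why logarithmic (rather than ordinary) spectral gaps appear. The paper's approach has the compensating advantage that the intermediate expression for $\dot\bQ$ is reused verbatim in the proof of Proposition~\ref{prop:MXL-eig} (the APT tracking statement), so the extra machinery is not wasted. Your remarks on the phase gauge and on eigenvalue crossings are apt; the paper leaves both points implicit.
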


\begin{IEEEproof}
By differentiating the identity $q_{\alpha} \delta_{\alpha\beta} = \bu_{\alpha}^{\dag} \bQ \bu_{\beta}$, we readily obtain:
\begin{flalign}
\label{eq:dQ-eig0}
\dot q_{\alpha} \delta_{\alpha\beta}
	&= \dot\bu_{\alpha}^{\dag} \bQ \bu_{\beta}
	+ \bu_{\alpha}^{\dag} \dot\bQ \bu_{\beta}
	+ \bu_{\alpha}^{\dag} \bQ \dot\bu_{\beta}
	\notag\\
	&= \bu_{\alpha}^{\dag} \dot\bQ \bu_{\beta}
	+ (q_{\alpha} - q_{\beta}) \bu_{\alpha}^{\dag} \dot\bu_{\beta},
\end{flalign}
where the last equality follows by differentiating the orthogonality condition $\bu_{\alpha}^{\dag} \bu_{\beta} = \delta_{\alpha\beta}$.
Thus, by
\begin{inparaenum}
[\itshape a\upshape)]
\item
taking $\alpha=\beta$
and
\item
solving for $\dot\bu_{\alpha}^{\dag}$
\end{inparaenum}
in \eqref{eq:dQ-eig0}, we respectively obtain:
\begin{subequations}
\label{eq:dQ-eig1}
\begin{flalign}
\label{eq:dq}
\dot q_{\alpha}
	&= \bu_{\alpha}^{\dag} \dot \bQ \bu_{\alpha}
	\\
\label{eq:du}
\dot\bu_{\alpha}^{\dag}
	&= \insum_{\beta\neq\alpha} \frac{\bu_{\alpha}^{\dag}\dot\bQ \bu_{\beta}}{q_{\alpha} - q_{\beta}} \bu_{\beta}^{\dag}
\end{flalign}
\end{subequations}

However, by using the Fréchet derivative of the matrix exponential \cite{Wil67}, we readily get:
\begin{flalign}
\label{eq:dQ}
\dot \bQ
	&= \frac{1}{\tr[\exp(\bY)]} \frac{d}{dt} \exp(\bY)
	- \exp(\bY) \frac{\tr[\dot\bY \exp(\bY)]}{\tr[\exp(\bY)]^{2}}
	\notag\\
	&= \frac{1}{\tr[\exp(\bY)]} \int_{0}^{1} \exp((1-s)\bY) \dot \bY \exp(s\bY) \dd s
	- \bQ \tr[\bV \bQ] 
	\notag\\
	&= \int_{0}^{1} \bQ^{1-s} \bV \bQ^{s} \dd s - \bQ \tr[\bV \bQ],
\end{flalign}
and hence:
\begin{flalign}
\label{eq:dQ1}
\bu_{\alpha}^{\dag} \dot\bQ \bu_{\beta}
	&= \int_{0}^{1} \bu_{\alpha}^{\dag} \bQ^{1-s} \bV \bQ^{s}\bu_{\beta} \dd s
	- \tr[\bV \bQ] \cdot \bu_{\alpha}^{\dag} \bQ \bu_{\beta}
	\notag\\
	&= \int_{0}^{1} q_{\alpha}^{1-s} V_{\alpha\beta} q_{\beta}^{s} \dd s
	- q_{\alpha} \delta_{\alpha\beta} \insum_{\gamma} q_{\gamma} V_{\gamma\gamma},
\end{flalign}
where we have set $V_{\alpha\beta} = \bu_{\alpha}^{\dag} \bV \bu_{\beta}$.
Thus, by carrying out the integration in \eqref{eq:dQ1}, we finally obtain:
\begin{equation}
\label{eq:dQ2}
\bu_{\alpha}^{\dag} \dot\bQ \bu_{\beta}
	= \frac{q_{\alpha} - q_{\beta}}{\log q_{\alpha} - \log q_{\beta}} V_{\alpha\beta}
	- q_{\alpha} \delta_{\alpha\beta} \insum_{\gamma} q_{\gamma} V_{\gamma\gamma},
\end{equation}
with the convention $(x-y) / (\log x - \log y) = x$ if $x = y$.
Eq.~\eqref{eq:MXL-eig} then follows by substituting \eqref{eq:dQ2} in \eqref{eq:dQ-eig1}.
\end{IEEEproof}

\begin{IEEEproof}[Proof of Proposition \ref{prop:MXL-eig}]
Combining \eqref{eq:MXL} and the derivative expression \eqref{eq:dQ}, we get:
\begin{flalign}
\label{eq:deltaQ}
\bQ(n+1)
	&= \frac{\exp(\bY(n+1))}{\tr[\exp(\bY(n+1))]}
	= \frac{\exp(\bY(n) + \step_{n} \bV(n))}{\tr[\exp(\bY(n) + \step_{n} \bV(n))]}
	\notag\\
	&= \bQ(n)
	+ \step_{n} \int_{0}^{1} \bQ(n)^{1-s} \bV(n) \bQ(n)^{s} \dd s
	\notag\\
	&- \step_{n} \tr[\bQ(n) \bV(n)] \cdot \bQ(n)
	+ \bigoh\left(\step_{n}^{2} \norm{\bV(n)}^{2}\right),
\end{flalign}
where the term $\bigoh\left(\step_{n} \norm{\bV(n)}^{2}\right)$ is bounded from above by $C\step_{n}^{2}\norm{\bV(n)}^{2}$ for some constant $C$ that does not depend on $\bQ(n)$.
Since $\step_{n}\to0$ by assumption, Remark 4.5 in \cite{Ben99} shows that the quadratic error in \eqref{eq:deltaQ} can be ignored in the long-run, so $\bQ(n)$ is an \ac{APT} of the dynamics \eqref{eq:dQ}.
Hence, by Proposition \ref{prop:eigen-dynamics}, the eigen-decomposition $\{q_{\alpha}(n),\bu_{\alpha}(n)\}$ is an \ac{APT} of \eqref{eq:MXL-eig}, as claimed.
\end{IEEEproof}

\begin{IEEEproof}[Proof of Theorem \ref{thm:conv-EXL}]
Consider the following Euler discretization of the eigen-dynamics \eqref{eq:MXL-eig}:
\begin{subequations}
\begin{flalign}
q_{\alpha}
	&\leftarrow q_{\alpha}
	+ \step_{n} q_{\alpha} \left(V_{\alpha\alpha} - \insum_{\beta} q_{\beta} V_{\beta\beta} \right),
	\\
\bu_{\alpha}
	&\leftarrow \bu_{\alpha}
	+ \step_{n} \insum_{\beta\neq\alpha} \frac{V_{\beta\alpha}}{\log q_{\alpha} - \log q_{\beta}}
	\bu_{\beta},
\end{flalign}
\end{subequations}
i.e. the update step of Alg.~\ref{alg:EXL} without the orthonormalization correction for $\bu_{\alpha}$.
We then obtain:
\begin{flalign}
\bu_{\alpha}^{\dag}(n+1) \cdot \bu_{\beta}(n+1)
	&= \bu_{\alpha}^{\dag}(n) \cdot \bu_{\beta}(n)
	+ \bigoh(\step_{n}^{2}),
\end{flalign}
which shows that the orthonormalization correction in Alg.~\ref{alg:EXL} is quadratic in $\step_{n}$.
Thus, as long as $\step_{n}$ is chosen small enough (so that $q_{\alpha}(n)\geq0$ for all $n$), Remark 4.5 in \cite{Ben99} shows that the iterates of Alg.~\ref{alg:EXL} comprise an \ac{APT} of \eqref{eq:MXL-eig}.
In turn, the same reasoning as in the proof of Prop. \ref{prop:MXL-eig} can be used to show that $\bQ(n) = \insum_{\alpha} q_{\alpha}(n) \bu_{\alpha}(n)\bu_{\alpha}^{\dag}(n)$ is an \ac{APT} of \eqref{eq:MXL-cont}, so $\bQ(n)$ converges to the solution set of \eqref{eq:RM} by Theorem \ref{thm:conv}.
\end{IEEEproof}

\subsection{The fast-fading regime}
\label{app:proofs-ergodic}

\begin{proof}[Proof of Theorem \ref{thm:conv-erg}]
Let $\bV_{\erg} = \nabla \ergrate$ denote the gradient of the ergodic sum rate function $\ergrate$ and consider the dynamics:
\begin{equation}
\label{eq:MXL-erg}
\begin{aligned}
\dot \bY
	&= \bV_{\erg},
	\\
\bQ
	&= \frac{\exp(\bY)}{\tr[\exp(\bY)]}.
\end{aligned}
\end{equation}
The same reasoning as in the proof of Theorem \ref{thm:conv-cont} shows that \eqref{eq:MXL-erg} converges to the unique minimizer of the (strictly concave) sum rate maximization problem \eqref{eq:ERM}.
Moreover, given that $\rate$ is concave for any fixed channel matrix $\bH$ and $\ergrate$ is finite on $\spectron$, we have \cite{Str65}: 
\begin{equation}
\bV_{\erg}
	= \nabla_{\bQ} \ergrate
	= \ex_{\bH} \big[\nabla_{\bQ} \rate(\bQ)\big]
	= \ex_{\bH}[\bV],
\end{equation}
with $\bV$ defined as in \eqref{eq:V}.
With $\bV$ bounded, it follows that $\bV_{\erg}$ is Lipschitz, so Propositions 4.2 and 4.1 in \cite{Ben99} imply that the iterates of \eqref{eq:MXL} with noisy measurements satisfying \hypref{hyp:zeromean} and \hypref{hyp:MSE} comprise a stochastic approximation of the mean dynamics \eqref{eq:MXL-erg}.
The rest of the proof then follows as in the case of Theorem \ref{thm:conv}.
\end{proof}

\bibliographystyle{IEEEtran}
\footnotesize
\setlength{\bibsep}{0pt}
\bibliography{IEEEabrv,Bibliography}

\end{document}